\newtheorem{theorem}{Theorem}
\newtheorem*{definition*}{Def}
\newtheorem*{theorem*}{Theorem}
\newtheorem*{corollary*}{Corollary}
\newtheorem*{lemma*}{Lemma}
\newtheorem*{teorema*}{Teorema}
\newcommand{\ud}{\mathrm{d}}
\newcommand{\gtl}[1]{{\color{black}#1}}
\newcommand{\GGrev}[1]{\textcolor{black}{#1}}
\newcommand{\stkout}[1]{\ifmmode\text{\sout{\ensuremath{#1}}}\else\sout{#1}\fi}
\newcommand{\fix}[2]{{\color{black}#2}}
\begin{document}

\title{Quantum thermoelectric transmission functions with minimal current fluctuations}
\date{\today}
\author{Andr\'e M. Timpanaro}
\affiliation{Universidade Federal do ABC,  09210-580 Santo Andr\'e, Brazil}
\author{Giacomo Guarnieri}
\affiliation{Department of Physics, University of Pavia, Via Bassi 6, 27100, Pavia, Italy}
\author{Gabriel T. Landi}
\affiliation{Instituto de F\'isica da Universidade de S\~ao Paulo,  05314-970 S\~ao Paulo, Brazil.}

\begin{abstract}
Thermodynamic Uncertainty Relations (TURs) \GGrev{represent a benchmark result in non-equilibrium physics that allows to place fundamental} lower bounds on the noise-to-signal ratio (precision) of currents in nanoscale devices.
Originally formulated for classical time-homogeneous Markov processes, these relations, were shown to be violated in \GGrev{thermoelectric engines and photovoltaic devices supporting quantum-coherent transport}.
However, the extent to which these violations may occur still represent\fix{}{s} a missing piece of the puzzle.
In this work we provide \GGrev{such answer in a definitive way within the general Landauer-B\"{u}ttiker formalism~\fix{}{for non-interacting systems,} beyond any perturbative regime, e.g. linear response.
In particular, using analytical constrained-optimization techniques, we rigorously demonstrate that the transmission function which maximizes the reliability of thermoelectric devices (i.e. which minimizes the fluctuations of its steady-state currents) for fixed average power and efficiency is a collection of boxcar functions.}
This allows us to show that TURs can be violated by arbitrarily large amounts, depending on the temperature and chemical potential gradients, thus providing guidelines to the design of optimal devices.

\end{abstract}

\maketitle

\section{Introduction}

Since the industrial revolution, thermal machines represent a crucial component of our technological landscape, and optimizing their performance has been an overarching goal of engineers and scientists for centuries.
At the macroscopic level, the Second law of Thermodynamics imposes a fundamental trade-off between the average power output of thermal machines and their efficiency. Specifically, the Carnot bound provides the maximum efficiency that can be reached, which is achieved only for infinitely slow processes, thus  entail vanishing dissipation but also no output power ~\cite{Carnot1824}. Conversely, finite power output implies a lower efficiency than Carnot’s.

The 20th century saw a relentless march towards miniaturization, leading to the development and control of nano-scale devices that are able to interconvert heat and particles at the microscopic scale.  Thermoelectric devices represent one of the most prominent examples of this trend. These devices have the ability to convert heat into electricity, and vice versa, with a range of applications including the generation of renewable energy and the cooling of electronic devices ~\cite{Beretta2019, Benenti2017}.
According to the Landauer-Buttiker theory, in the non-linear regime, thermoelectric devices can be fully characterized by means of the transmission function ~\cite{Landauer1957}. Over the years, significant research has been directed towards optimizing these devices to reach their best performance. 
A first paper by Mahan and Sofo ~\cite{Mahan1996} showed that a narrowly-peaked distribution of the  transmission function would yield the highest efficiency in linear response regime. In accordance with the second law’s predictions, the corresponding power output, however, would vanish. This result was later generalized in Ref.~\cite{Esposito2009EPL} and by Whitney in Ref.~\cite{Whitney2014}, who raised the question of what electronic structure would provide the highest efficiency (or, equivalently, the smallest dissipation) beyond linear response for a given finite power output. Whitney's analysis showed that a boxcar transmission function is needed to achieve maximum efficiency at given average power output.
However, at the microscopic scale, not only the average, but also fluctuations become extremely relevant and need to be properly taken into account in order to determine the performance of thermal machines. The quest towards characterizing their properties during out-of-equilibrium processes and quantifying their connection with dissipation has been a leitmotif throughout the development of stochastic thermodynamics. \cite{Verley2014a,Pietzonka2017,Denzler2019,Denzler2020b,Miller2020,Miller2020b,Denzler2021,Benenti2020}. 
Recently, an important result was found, the Thermodynamic Uncertainty Relations (TURs)~\cite{Barato2015,Gingrich2016,Horowitz2019}. In their simplest form, originally found and proven for Markov processes at a steady-state regime, these inequalities take the form
\begin{equation}\label{TUR}
    \frac{\Delta_\mathcal{J}^2}{\mathcal{J}^2} \geqslant \frac{2}{\sigma},
\end{equation}
with $\Delta_\mathcal{J}^2$ denoting the variance of a generic current, $\mathcal{J}$ its corresponding mean value and $\sigma$ the average entropy production rate.
Eq.~\eqref{TUR} expresses a tradeoff between precision, as quantified by the noise to signal ratio of any thermodynamic quantity, and dissipation.
Beside their fundamental importance, TURs have profound implications for the performance and design of microscopic thermal machines. As shown in Ref. ~\cite{Pietzonka2017} TURs provide in fact a new upper bound,~\fix{different from}{complementing} the one imposed by the second law of thermodynamics, on the maximum achievable efficiency which does not depend only on the average power output anymore but also on the power fluctuations. 
With a little algebra, it is in fact straightforward to show that Eq.~\eqref{TUR} when applied to the power output of an engine,  i.e. $\mathcal{J} = P$, translates into
\begin{equation}\label{TURefficiency}
P \leq \frac{\Delta_P^2}{2T_C}\left(\frac{\eta_C}{\eta}-1\right),
\end{equation}
where $\eta$ is the engine's efficiency, $\eta_C$ is Carnot efficiency and $T_C$ is the temperature of the cold reservoir.  
On the one hand,  Eq.\eqref{TURefficiency}, shows that reaching Carnot's efficiency at finite power is possible but at the cost of diverging power fluctuations, i.e. zero engine's reliability. More generally, this relation highlights how fluctuations quantitatively affect the maximum power achievable at a given efficiency (or, equivalently, at a given amount of dissipation), thus showcasing their importance towards the achievement of optimal nanoscale devices.

Recent studies, however, proved how Eq.~\eqref{TUR} could be violated precisely in thermoelectric devices ~\cite{Agarwalla2018,Ptaszynski2018a,Saryal2019a,Guarnieri2019,Liu2019,Cangemi2020,
Ehrlich2021,Kalaee2021,Saryal2020,Liu2020,Potanina2021,Brandner2018a}, opening up~\fix{to }{}the possibility that quantum mechanics could in principle be exploited to further curb down fluctuations and thus achieve higher performances. 
In this regard, Ref.~\cite{Ehrlich2021} characterized the fluctuations of the power stemming from a boxcar transmission function, i.e. the one yielding the maximum efficiency at finite power output, demonstrating that arbitrarily large violations could be obtained in a thermoelectric device at very high chemical potential gradients. This very naturally leads to the following relevant question, that takes the baton of the research line outlined above: \textit{what is the maximum precision (i.e. the minimum amount of fluctuations) achievable for a given efficiency and output power in a thermoelectric device?}
The current lack of a universal quantum formulation of TURs does not allow in fact to provide a definitive answer to this interrogative; partial answers were provided in the context of thermoelectric systems in linear response regime~\cite{MacIeszczak2018} and in next leading order in the biases ~\cite{Guarnieri2019,Agarwalla2018,Saryal2019a}, as well as for specific sub-classes of fluctuation relations (known as Evan-Searles-Jarzynski-Wojcik fluctuation relations) ~\cite{Merhav2010,Proesmans2019a,Potts2019,VanVu2019,Timpanaro2019,Hasegawa2019,Ray2023,timpanaro-TUR-2024}. 
In this work, we solve this challenging problem by precisely answering the above question. We achieve this by developing a general method to solve concave optimization problems, that we then apply to find the optimal transmission function for which one has \textit{the smallest possible variance $\Delta_{\mathcal{J}}^2$, for fixed $\mathcal{J}$ and $\sigma$}. The answer is remarkably given by a collection of boxcar functions, with boundary positions determined by the values of $I$ and $\sigma$. 
This result can finally be interpreted in the spirit of TURs as the ultimate lower bound to any noise to signal ratio for~\fix{}{non-interacting} thermoelectric devices, valid for arbitrary intensities of the biases. Our work thus provides a\fix{ comprehensive}{n important step forwards in the} solution to the long-standing question of the ultimate bound on the performance of thermoelectric devices, taking into account both the average power and the power fluctuations. Furthermore, our result provides a bound which is always tighter than~\eqref{TUR} close to the linear response regime, reducing to it only in some particular cases.  We finally illustrate our findings in a double quantum dot system coupled between two fermionic reservoirs at different temperatures and chemical potentials.

\section{Landauer-B\"uttiker framework}
We consider a meso- or nanoscale non-interacting quantum system, e.g. a quantum dot array, simultaneously coupled to two macroscopic fermionic reservoirs at different inverse temperatures $\beta_i = 1/T_i$ and chemical potentials $\mu_i$ ($i = L,R$), as depicted in Fig.~\ref{fig:drawing}.
Within the Landauer-B\"uttiker formalism, the (non-equilibrium) steady-state regime is fully characterized by a transmission function $\mathcal{T}(\epsilon) \in [0,1]$.
The particle and energy currents are given by~\cite{Datta1997a}  
\begin{equation}\label{IJ}
    I = \int d\epsilon~\mathcal{T}(\epsilon)~\Delta f(\epsilon), 
    \qquad 
    J = \int d\epsilon~\mathcal{T}(\epsilon)~\epsilon~\Delta f(\epsilon), 
\end{equation}
where $\Delta f(\epsilon) = f_L(\epsilon) - f_R(\epsilon)$,
with $f_i(\epsilon) = (e^{\beta_i(\epsilon- \mu_i)}+1)^{-1}$  denoting the Fermi-Dirac distributions of the left and right reservoirs.

The entropy production rate is given by~\cite{Yamamoto2015a} 
\begin{equation}
    \sigma = -\delta_\beta J + \delta_{\beta \mu} I \geqslant 0,
\end{equation}
where $\delta_\beta = \beta_L - \beta_R$ and $\delta_{\beta\mu} = \beta_L \mu_L - \beta_R \mu_R$. 
Finally, fluctuations around the mean values can be obtained by means of the Levitov-Lesovik full counting statistics formalism~\fix{}{if electron-electron interactions are neglected and the long time limit is considered}~\cite{Levitov1993}.
For concreteness, we focus  on the variance of the particle current $\Delta_I^2$, which is given by 
\begin{IEEEeqnarray}{rCl}\label{var}
    \Delta_I^2 &=&\!\!\! \int d\epsilon~ \mathcal{T}(\epsilon)\left[f_L(\epsilon) + f_R(\epsilon) - 2f_R(\epsilon)f_L(\epsilon) - \mathcal{T}(\epsilon)\Delta f^2(\epsilon) \right],
    \notag\\[0.2cm]
    &=& \int d\epsilon~\Big\{ \mathcal{T}(\epsilon) g(\epsilon) + \mathcal{T}(\epsilon) \big[1 - \mathcal{T}(\epsilon)\big] \Delta f(\epsilon)^2\Big\},
\end{IEEEeqnarray}
where we have introduced for convenience of notation the quantity
\begin{equation}
    g(\epsilon) = f_L(\epsilon)\left[1-f_L(\epsilon)\right] + f_R(\epsilon)\left[1-f_R(\epsilon)\right].
\end{equation}
Similar considerations can be made for the fluctuations of other currents, e.g. of energy.
Moreover, all ideas can be readily extended to systems involving multiple transport channels, such as spin-dependent transmission functions. 

From now on we will focus on the particle current $I$ and our goal will be to find the transmission function which minimizes $\Delta_I^2$ for fixed $I$ and $\sigma$.
Since $\sigma = -\delta_\beta~J + \delta_{\beta \mu}~I$, one can equivalently fix $I$ and $J$.
In fact, our main theorem below  holds for an arbitrary number of constraints, provided they are linear in $\mathcal{T}(\epsilon)$.

\begin{figure}
    \centering
    \includegraphics[width=0.48\textwidth]{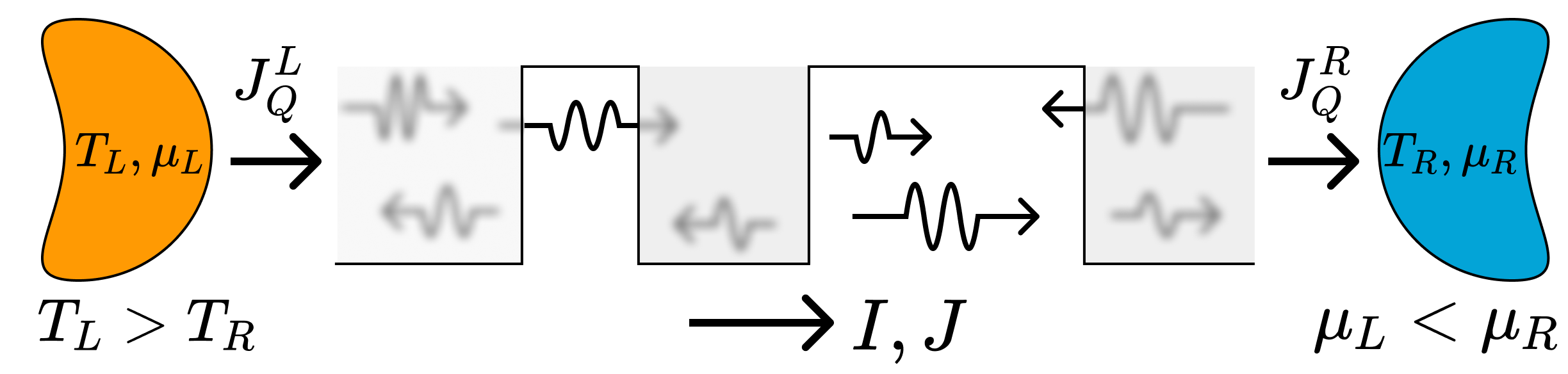}
    \caption{The transport properties across a thermoelectric device is determined by a transmission function $\mathcal{T}(\epsilon)$. In this letter we determine the $\mathcal{T}(\epsilon)$ which minimizes the fluctuations (variance) in the current, for fixed average energy and particle currents.}
    \label{fig:drawing}
\end{figure}

Thermoelectrics can also be viewed as autonomous thermal engines. 
The output power, associated to chemical work, is $P = - \delta_{\mu} I$, where $\delta_\mu = \mu_L - \mu_R$, while the heat current to each bath is given by $J_Q^i = J - \mu_i I$ (Fig.~\ref{fig:drawing}). 
For concreteness, we assume always $T_L > T_R$.
The system will then operate as an engine when $P, J_Q^L,J_Q^R>0$, in which case the efficiency is $\eta = P/J_Q^L$~\cite{Yamamoto2015a}. 
This interpretation allows us to  draw a connection with the seminal results of Ref.~\cite{Whitney2014}, which considered the transmission function maximizing the efficiency for a fixed output power. 
Fixing $P$ is tantamount to fixing $I$. 
Hence, maximizing the efficiency $\eta = - \delta_\mu I/(J - \mu_L I)$ is equivalent to minimizing $J$ for a given power output. 
The problem in~\cite{Whitney2014} can thus be rephrased as which transmission function minimizes $J$ for fixed $I$. 

A crucial difference with respect to our case, however, is that $\Delta_I^2$ is a non-linear (quadratic) functional of $\mathcal{T}$. 
Moreover, it is also a \emph{concave} one. 
Standard tools, such as Lagrange multipliers, therefore do not apply. 
Intuitively speaking, the minimum of a concave function, defined on an interval, is at the boundary of said interval, not somewhere in the middle, as in the convex case. 


A similar argument can be made to the problem at hand, but involving a functional of $\mathcal{T}(\epsilon)$, which can be viewed as a function defined on an infinite dimensional space. 

\section{Main results}

The main result of this Article can be condensed into the following theorem:
\\ \\
{\bf Theorem 1}: \emph{The transmission function $\mathcal{T}(\epsilon)$ which minimizes $\Delta_I^2$ [Eq.~\eqref{var}], for any number of linear constraints, is a collection of boxcars with $\mathcal{T}(\epsilon)$ being either 0 or 1; that is, 
\begin{equation}
    \mathcal{T}_{\rm opt}(\epsilon) = \sum\limits_i \theta(\epsilon-a_i) \theta(b_i - \epsilon),
\end{equation}
where $\theta(x)$ is the Heaviside function and $a_i,b_i \in [-\infty,+\infty]$ are the boxcars boundary points, that are fixed by the linear constraints.
}

While a rigorous proof of this result is provided in the Appendix \ref{supmat:theorem1}, we would like to give a somewhat more heuristic and intuitive explanation. As previously said, the minimum of a concave function defined on an interval is attained at the boundaries. In our case at hand, we need to extrapolate this reasoning to the minimization of a concave functional, i.e. the fluctuations $\Delta^2_I$, which depend on a function $\mathcal{T}(\epsilon)$ instead of a parameter. To do so, we can start by considering a $N$-discretized version of the transmission function as a function from some finite subset of the real line to $[0,1]$. \GGrev{Since for each~\fix{}{of the} $N$~\fix{,}{elements} the transmission function takes values between $0$ and $1$, the region of~\fix{}{possible} values describes a hypercube $[0,1]^N$}, 
where each coordinate represents one of the values of the function, $\mathcal{T}(\epsilon_k)$. In this situation $\Delta^2_I$ is a concave function of the $\mathcal{T}(\epsilon_k)$, whose domain is the hypercube. But this means that the minimum is at the vertices of such hypercube. 
\GGrev{To see this, consider any line segment inside of the hypercube. The function $\mathcal{T}(\epsilon_k)$ restricted to this line segment is concave, univariate and has an interval for a domain, so the minimum in the segment must be one endpoint, which always lies in the boundary of the hypercube (see figure \ref{fig:scheme-concavity}).
Having to simultaneously satisfy this consideration for all segments~\fix{of}{inside} the hypercube\fix{ on which each $\tau(\epsilon_k)$ is defined}{}, one arrives at the conclusion that the minimum is a vertex\fix{ of the hypercube}{}, where the transmission function~\fix{}{only} takes values either $0$ or $1$. 
This thus motivates us to consider boxcar functions as an ansatz.}

\begin{figure}[htbp]
    \centering
    \includegraphics[width=0.3\textwidth]{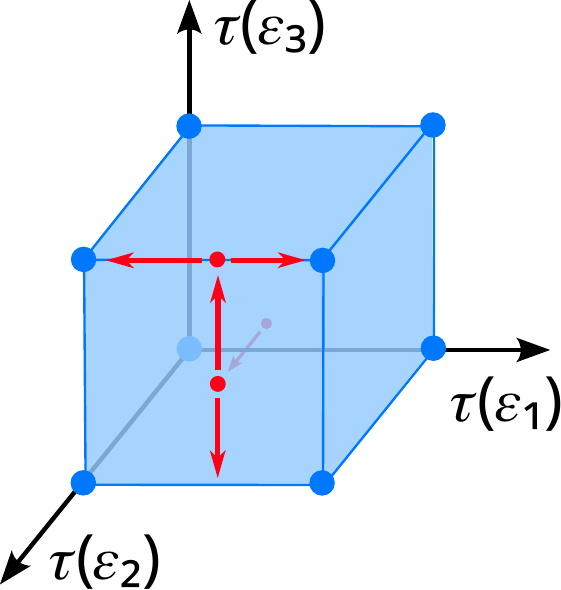}
    \caption{If we discretize the integrals defining $\Delta^2_I$, the fluctuations become a concave function, that always attains its minimum in the vertices, corresponding to boxcar functions.
    }
    \label{fig:scheme-concavity}
\end{figure}


It is worth stressing that Theorem 1 holds true for an arbitrary number of linear constraints. 
The fact that the optimal $\mathcal{T}(\epsilon)$ is a (collection of) boxcar implies that we can substitute $\mathcal{T}^2 = \mathcal{T}$ in Eq.~\eqref{var}, and express it as 
\begin{equation}\label{eq:simple}
\Delta_I^2 = \sum_i \int_{a_i}^{b_i} d \epsilon~g(\epsilon),
\end{equation} 
where $g(\epsilon) = f_L(1-f_L) + f_R(1-f_R)$ and where $\lbrace a_i,b_i\rbrace$ are the boundary points of the boxcars. 
Eq.\eqref{eq:simple} has two very important consequences. The first one is that the original minimization problem, has become a linear one; the second is that the optimization is now turned from finding an optimal continuous function $\mathcal{T}(\epsilon)$ to finding the position of a \textit{finite} set of values, i.e. the boundary points $\lbrace a_i,b_i\rbrace$. The simplification brought by Theorem 1 therefore allows to find those points by means of a Lagrange-Karush-Kuhn-Tucker optimization procedure \cite{lin-prog, vanderberghe}.
To be more concrete, let us focus on the case where $I =I_0$ and $J = J_0$ are fixed. The functional to minimize therefore reads:
\begin{equation}\label{Lagrangian}
    L(\lbrace a_i, b_i\rbrace;\lambda, \eta) = \Delta^2_{I \fix{,\mathrm{opt}}{}} + \lambda(J-J_0) + \eta(I-I_0),
\end{equation}
where $\lambda$ and $\eta$ are the Lagrange multipliers introduced to fix $I$ and $J$.
This leads to our second main result (see Appendix \ref{app:Proof} for a detailed proof):\\ \\
{\bf Theorem 2}: \emph{The position of the optimal boxcars is determined by the regions in energy for which 
\begin{equation}\label{box_equation}
    g(\epsilon) \leqslant (\lambda \epsilon + \eta) \Delta f(\epsilon).
\end{equation}
}
This allows us to identify the boundaries of the boxcars $\lbrace a_i, b_i\rbrace$ when Eq. (\ref{box_equation}) is saturated, i.e. $g(\epsilon) = (\lambda \epsilon + \eta) \Delta f(\epsilon)$. We note that this can also be obtained by minimizing Eq. (\ref{Lagrangian}) through standard Lagrange multipliers. The sign of the inequality in Eq. (\ref{box_equation}), on the other hand, determines which of these endpoints are left/right endpoints of the boxcars. An explanation of its origin is however more subtle and can be found in appendix \ref{supmat:theorem2} where the full Lagrange-Karush-Kuhn-Tucker procedure is done.

Since the problem is now framed in terms of Lagrange multipliers, one should interpret $I(\eta, \lambda)$ and $J(\eta, \lambda)$ as functions of $\eta, \lambda$.
A given choice of $\eta,\lambda$ fixes a\fix{}{n} optimal boxcar, which in turn fixes $I$ and $J$ (the values of the currents for which said boxcar gives the minimum fluctuations).
In practice, of course, what we want is to  work with fixed $I,J$, which thus requires determining the inverse functions $\eta(I,J)$ and $\lambda(I,J)$. 
This has to be done numerically.
We developed a Python library for doing so, which can be downloaded at~\cite{python}.
The calculations are facilitated by the fact that, as we show in appendix \ref{supmat:monotonic},  $I$ is monotonic in $\eta$, and $J$ in $\lambda$.
Once the optimal boxcar~\fix{}{$\mathcal{T}_{\rm opt}$} is determined for a given $I, J$, the minimal variance $\Delta_{I, {\rm opt}}^2$ is computed from Eq.~\eqref{var}, with $\mathcal{T} = \mathcal{T}_{\rm opt}$.
Since the latter is by construction the smallest possible variance out of all transmission functions, it follows that $\Delta_I^2 \geqslant \Delta_{I, {\rm opt}}^2$ for any other model.
This therefore represents a generalized TUR bound. And, in addition, also establishes which process saturates it.

\section{Physical interpretation and significance}

Our framework allows to find the transmission function with the smallest variance, for a fixed $I$ and $J$.
\fix{In practice one is often interested in situations where $I$ and $J$ stem from a concrete physical model.
That is, they are determined from Eq.~\eqref{IJ} by some given transmission function $\mathcal{T}(\epsilon)$. 
One may then ask how does the variance associated to this transmission function fare with respect to the optimal one (obtained from Theorems 1 and 2)?}
{This optimal transmission function will be given by a collection of boxcars, which raises the natural question of how they could be achieved (or approximated) in a real system. While the full answer to this question is outside the scope of the current work, it is known that a chain of quantum dots can be used to achieve an isolated boxcar \cite{whitney-boxcar-construction, Ehrlich2021}. Nevertheless, even if a given optimal transmission function turns out to be physically impossible, the bound derived from it is still valid.

Another question that arises is how the optimal bound compares with the variance obtained from a concrete physical model. That is, if $I, J$ and $\Delta^2_I$ are determined from a given transmission function $\mathcal{T}(\epsilon)$, how do the boxcars obtained from Theorems 1 and 2 fare?}
To illustrate this idea, we consider the problem studied in Ref.~\cite{Ptaszynski2018a}, which  discussed violations of the TUR~\eqref{TUR} in the case of a resonant double quantum-dot system, characterized by the transmission function
\begin{equation}\label{bridge}
    \mathcal{T}_d(\epsilon) = \frac{\Gamma^2 \Omega^2}{|(\epsilon-\omega+i\Gamma/2)^2-\Omega^2 |^2},
\end{equation}
where $\Gamma$ is the system-bath coupling strength, $\omega$ is the excitation frequency of each dot and $\Omega$ the inter-dot hopping constant. 
For simplicity, we fix $T_L = T_R$ and $\mu_R = -\mu_L = \delta_\mu/2$. 

TUR violations can be quantified by analyzing the Fano factor 
$F = \Delta_I^2 /|I|$.
Since $\sigma = \beta \delta_\mu I$ in this case, we see that the TUR~\eqref{TUR} would correspond to $F \geqslant 2/(\beta\delta_\mu)$.
Violations thus occur when $F\beta\delta_\mu< 2$. 
Results for the double quantum dot are shown in Fig.~\ref{fig:qdot}, in red-solid lines, as a function of $\delta_\mu$. 
The violations are generally small, of at most $1.86$ in this case.
This, of course, depends on the choices of parameters, but other results reported in the literature are roughly of the same magnitude~\cite{Saryal2020,Ptaszynski2018a,Agarwalla2018,Liu2019,Saryal2019a,Ehrlich2021}. 
In contrast, the blue-dashed line represents the Fano factor obtained from Eq.~\eqref{box_equation}. This corresponds to the same values of $I, J$ (and $\sigma$) as the red curve, but with the smallest possible $\Delta_{I,{\rm opt}}^2$ allowed over all transmission functions. 
As can be seen, the variance obtained from our generalized TUR bound is not violated by the double quantum dot. Furthermore, $F\beta\delta_\mu$ is now  monotonically decreasing with $\delta_\mu$, and \emph{tends to zero} at infinite bias. 
Consequently, far from linear response, arbitrary violations of the TUR are possible. 
Conversely, close to $\delta_\mu \sim 0$, one recovers $F\beta\delta_\mu = 2$.

\begin{figure}
    \centering
    \includegraphics[width=0.4\textwidth]{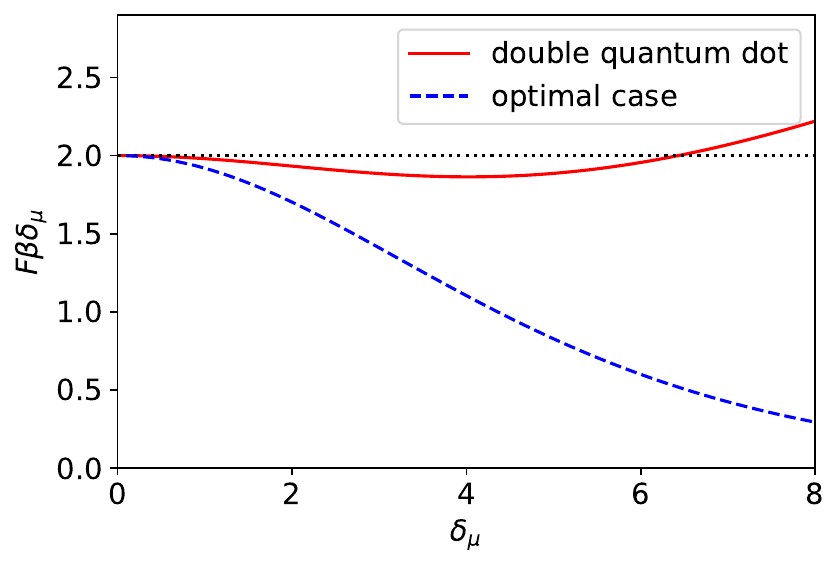}
    \caption{Red-solid:  Fano factor  $F = \Delta_I^2/I$, in units of $(\beta \delta_\mu)^{-1}$, for the double quantum-dot in Eq.~\eqref{bridge}, as a function of $\delta_\mu$. Violations of the TUR~\eqref{TUR} occur when this falls below 2~\cite{Ptaszynski2018a}. 
    Blue-dashed: 
    Fano factor for the minimal variance process, computed using our framework.  
    Parameters: $\Gamma = 0.1$, $\Omega = 0.05$, $\omega = 0$, $\beta_L = \beta_R = 1$ and $\mu_R = -\mu_L = \delta_\mu/2$.
    }
    \label{fig:qdot}
\end{figure}

\section{Linear regime}

The results above illustrate that, far from linear response, it is impossible to bound $\Delta_I^2$ in terms only of $I$ and $\sigma$.
In fact, our results just showed that far from linear response, arbitrary violations of the TUR are possible (similarly to what was found in~\cite{Ehrlich2021}). 
Conversely, one may naturally ask whether the situation simplifies in the linear regime.
We parametrize 
$\beta_L = \beta + \delta_\beta/2$, 
$\beta_R = \beta - \delta_\beta/2$, and  
$\beta_L \mu_L = \beta \mu + \delta_{\beta\mu}/2$, 
$\beta_R \mu_R = \beta \mu - \delta_{\beta\mu}/2$.
Using this in the definition of both $g$ and $\Delta f$ yields, up to the order we are considering
\begin{equation}
    g(\epsilon) \simeq 2 f(\epsilon)(1-f(\epsilon)),
    \label{eq:g-linear-response}
\end{equation}
\begin{equation}
    \Delta f (\epsilon) \simeq \Sigma(\epsilon)f(\epsilon)(1-f(\epsilon)), 
    \label{eq:f-linear-response}
\end{equation}
where
$f(\epsilon) = (e^{\beta(\epsilon-\mu)}+1)^{-1}$ is the Fermi-Dirac distribution associated to the mean temperature and chemical potential and $\Sigma(\epsilon) = \delta_{\beta \mu}-\delta_\beta \epsilon$.
If we consider a general current, then its intensity and fluctuations are given respectively by \cite{Timpanaro2023}:

\begin{equation}
    \mathcal{J}_h = \int d\epsilon~h(\epsilon)~\mathcal{T}(\epsilon)~\Delta f(\epsilon),
    \label{eq:general-intensity}
\end{equation}
\begin{equation}
    \Delta^2_{\mathcal{J}_h} = \int d\epsilon~h^2(\epsilon)~\Big\{ \mathcal{T}(\epsilon) g(\epsilon) + \mathcal{T}(\epsilon) \big[1 - \mathcal{T}(\epsilon)\big] \Delta f(\epsilon)^2\Big\}.
    \label{eq:general-fluctuation}
\end{equation}
So for example, the entropy production is $\sigma = \mathcal{J}_{\Sigma}$. Using theorem 1, we get to the conclusion that for fixed values of $\mathcal{J}_h$ and $\sigma$, the transmission function that minimizes $\Delta^2_{\mathcal{J}_h}$ is a collection of boxcars $\mathcal{B}$. So suppose that the optimal transmission function is the boxcar collection that is 1 in $\mathcal{B}$ and define the functional
\begin{equation}
    \Theta[h] = \int_{\mathcal{B}} d\epsilon~h(\epsilon)~f(\epsilon)(1-f(\epsilon)).
    \label{eq:theta-functional}
\end{equation}
Combining equations (\ref{eq:f-linear-response}) and (\ref{eq:general-intensity}) we get $\mathcal{J}_h \simeq \Theta[h\Sigma]$ and $\Delta^2_{\mathcal{J}_h} \simeq 2\Theta[h^2]$. So

\begin{equation}
    \frac{\sigma \Delta^2_{\mathcal{J}_h}}{\mathcal{J}_h^2} \simeq \frac{2\Theta[h^2]\Theta[\Sigma^2]}{\Theta[h\Sigma]^2} \geq 2,
\end{equation}
where we used the Cauchy-Schwarz inequality. This means that up to the order considered, the optimal process obeys $\nicefrac{\Delta^2_{\mathcal{J}_h}}{\mathcal{J}_h^2} \geq \nicefrac{2}{\sigma}$ which is the classical TUR in equation (\ref{TUR}). Since we also considered a generic current, this implies that we recover this relation in the linear response regime.

To get a more concrete example, we consider again the particle and energy currents: $I = \mathcal{J}_1$ and $J = \mathcal{J}_{\epsilon}$. So substituting and expanding $\Sigma$ leads to

\begin{equation}
    \frac{\sigma \Delta^2_{I}}{I^2} \simeq \frac{2\Theta[1]\Theta[\Sigma^2]}{\Theta[\Sigma]^2} = 2 + \frac{2\delta_\beta^2}{I^2}(\Theta[1] \Theta[\epsilon^2] - \Theta[\epsilon]^2).
    \label{lrt}
\end{equation}

So we can see that if $\delta_\beta = 0$, as in Fig.~\ref{fig:qdot}, then the linear response limit \gtl{for the bound} is exactly $2$. However, if $\delta_\beta \neq 0$, then using Cauchy-Schwarz's inequality one may show that $\Theta[1] \Theta[\epsilon^2] \geqslant \Theta[\epsilon]^2$, so the rhs of~\eqref{lrt} will be strictly larger than 2. Furthermore, since the gradients are small in linear response then $I$ will also be small, so TUR~\eqref{TUR} is generally loose, unless $\delta_\beta = 0$. (see Figure \ref{fig:fano} for an example)

\begin{figure}
    \centering
    \includegraphics[width=0.9\columnwidth]{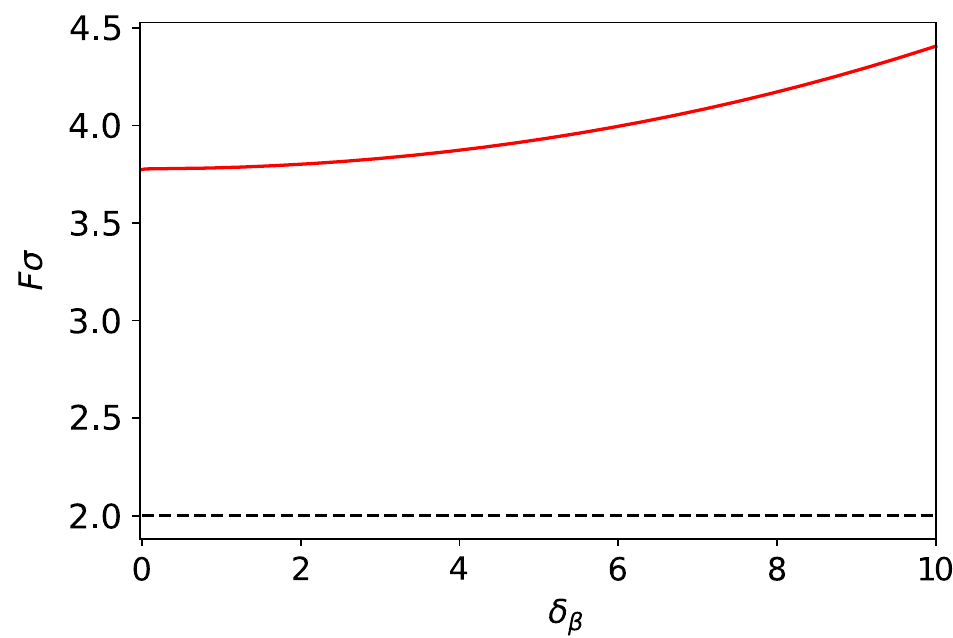}
    \caption{Red-solid: Fano factor $F = \Delta_I^2/I$, in units of $\sigma^{-1}$, for the optimal boxcar given by $\lambda\rightarrow\infty$, $\eta=0$, $\beta_{L(R)} = 10 \pm \nicefrac{\delta_{\beta}}{2}$, $\beta_{L(R)}\mu_{L(R)} = 2\pm \nicefrac{\delta_{\beta\mu}}{2}$, $\nicefrac{\delta_{\beta\mu}}{\delta_{\beta}} = \nicefrac{1}{2}$ as a function of $\delta_{\beta}$.
    Black-dashed: 
    Lower bound given by the TUR~\eqref{TUR}.
    }
    \label{fig:fano}
\end{figure}

\section{Optimal processes}

For each parameter set $(T_L, T_R, \mu_L, \mu_R)$, 
the currents $I$ and $J$ in Eq.~\eqref{IJ} can only take values within a finite interval, irrespective of what the transmission function is. 
An example is shown in Fig.~\ref{fig:boundaries}(a).
Before studying the predictions of Theorems 1 and 2 in more depth, it is thus convenient to establish the boundaries of this region, and then study the corresponding boxcars within them.

The particle current $I$ is bounded by two values, $I_{\rm min}$ and $I_{\rm max}$, which can be found directly from Eq.~\eqref{IJ} by noting that $\Delta f(\epsilon)$ changes sign only once, at the point $\epsilon_0 = \delta_{\beta\mu}/\delta_\beta$. 
Hence, $I_{\rm min/max}$ will have optimal  transmission functions $\mathcal{T}(\epsilon)$ given by boxcars starting at $\epsilon_0$ and extending to either $\pm\infty$. 
These are illustrated in Figs.~\ref{fig:boundaries}(b) and (c).

\begin{figure}[htbp!]
    \centering
    \includegraphics[width=0.5\textwidth]{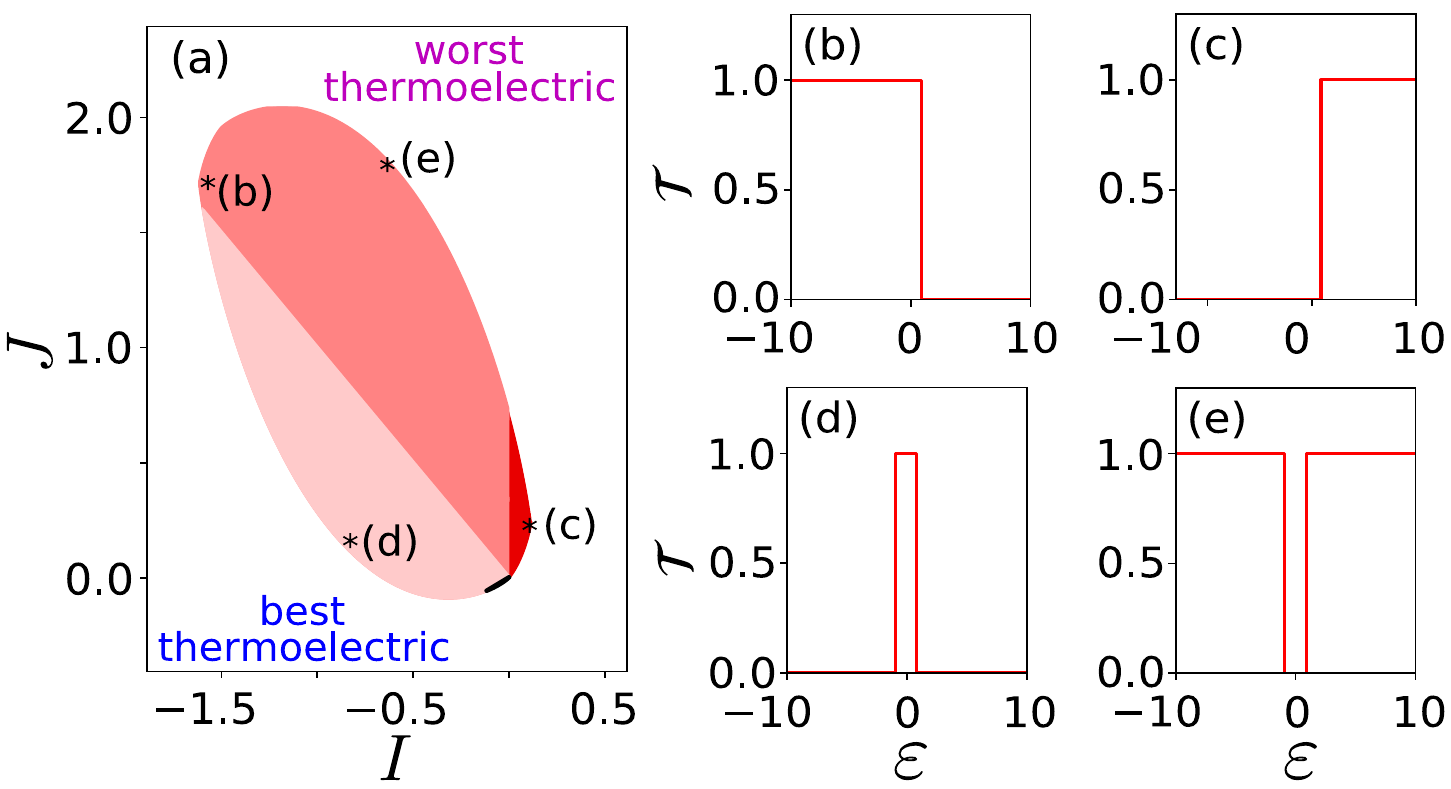}
    \caption{(a) When $(T_L,T_R, \mu_L, \mu_R) = (1,0.2,-1,1/2)$, the allowed values of $I$ and $J$, for any possible transmission function,  must lie within the region formed by the curves $J_{\rm min}(I)$ and $J_{\rm max}(I)$, with $I \in [I_{\rm min}, I_{\rm max}]$~\fix{}{which delimitate the convex region seen in this figure}. 
    \fix{The thick red part of the curve}{The different colors denote four possible regimes for this thermoelectric. The red region to the right} denotes the region where the system operates as an engine ($P = -\delta_\mu I >0$ and $J_Q^L = J- \mu_L I > 0$).~\fix{}{The small black region in the bottom denotes the region where it operates as a refrigerator ($P, J_Q^L < 0$ and $J_Q^R = J - \mu_R I < 0$). In the light pink region to the bottom left it operates as a heater ($P, J_Q^L < 0$ and $J_Q^R > 0$) while in the pink region to the top, it operates as an accelerator ($P< 0$ and $J_Q^L, J_Q^R > 0$).}
    (b)-(e) Examples of boxcars at different points along the boundary. 
    (b) $I_{\rm min}$, 
    (c) $I_{\rm max}$, 
    (d)~\fix{}{point in the } $J_{\rm min}(I)$~\fix{}{curve},
    (e)~\fix{}{point in the } $J_{\rm max}(I)$~\fix{}{curve}.
    }
    \label{fig:boundaries}
\end{figure}

For a given $I \in [I_{\rm min}, I_{\rm max}]$, the energy current $J$ will in turn be bounded by extremal values $J_{\rm min}(I)$ and $J_{\rm max}(I)$. 
At these lines, the solutions of Eq.~\eqref{box_equation}, which minimize $\Delta_I^2$, therefore also extremize $J$.
Since $J$ is monotonic in $\lambda$ \gtl{(see appendix)}, the extrema $J_{\rm min}(I)$ and $J_{\rm max}(I)$ must occur for 
$\lambda \to \pm \infty$. 
\begin{figure}[htbp!]
    \centering
\includegraphics[width=0.45\textwidth]{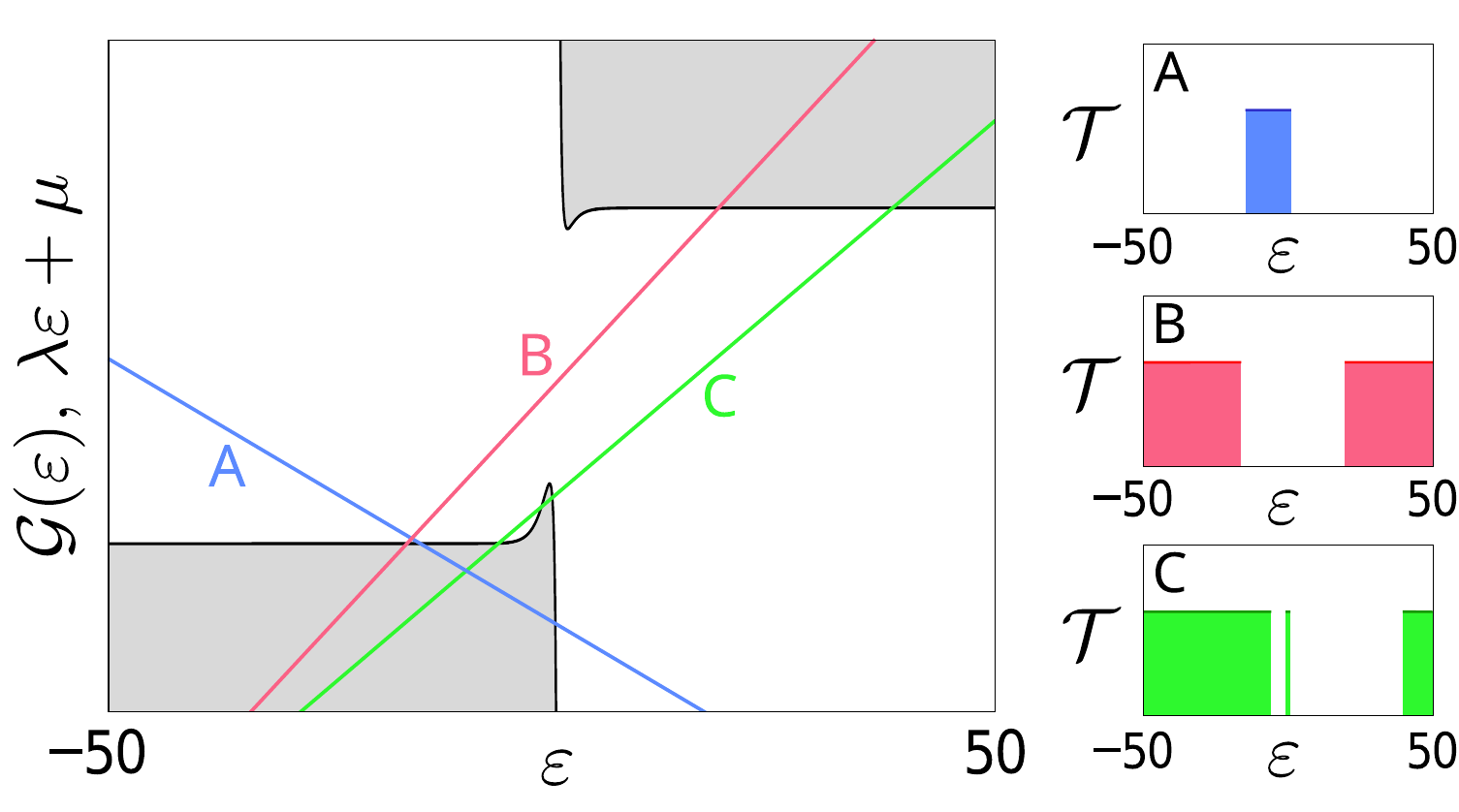}
    \caption{\fix{The crossings between the colored full lines and the graph of $\mathcal{G}$ define the optimal boxcars to the right (the transmission function is 1 in the intersections with the gray region). Each line corresponds to different values of the Lagrange multipliers $(\lambda, \eta)$ and hence to different values of the constraints. The dotted line illustrates one way the boxcars can change qualitatively. In it $\lambda = 0$ and we have a boxcar extending to $+\infty$, however changing $\lambda$ slightly to a positive value makes the boxcar finite, while changing it to a negative value adds a boxcar extending to $-\infty$. The dashed line illustrates the other possibility. In it the line is tangent to the graph of $\mathcal{G}$ and we have two boxcars. One extending to $-\infty$ (to the left out of the graph) and one extending to $+\infty$. In this case, decreasing $\eta$ first creates a third intermediary boxcar, that later merges with the one extending to $+\infty$ once the line becomes tangent to the graph again. The parameters used were $T_L = 1, T_R = 0.2, \mu_L = 0.1, \mu_R = 0.6$.}{Graph of $\mathcal{G}$ and different lines $\lambda \epsilon + \eta$. Each line corresponds to different values of the Lagrange multipliers $(\lambda, \eta)$ and hence to different values of the constraints. The solutions to (\ref{box_equation}) correspond to the intersections of the lines with the gray regions. For each of the colored lines $A$, $B$ and $C$, the corresponding optimal boxcar is displayed on the right. The parameters used were $T_L = 1, T_R = 0.2, \mu_L = 0.1, \mu_R = 0.6$.}}
    \label{fig:gD}
\end{figure}

\begin{figure*}[htbp!]
    \centering
    \includegraphics[height=4.1cm]{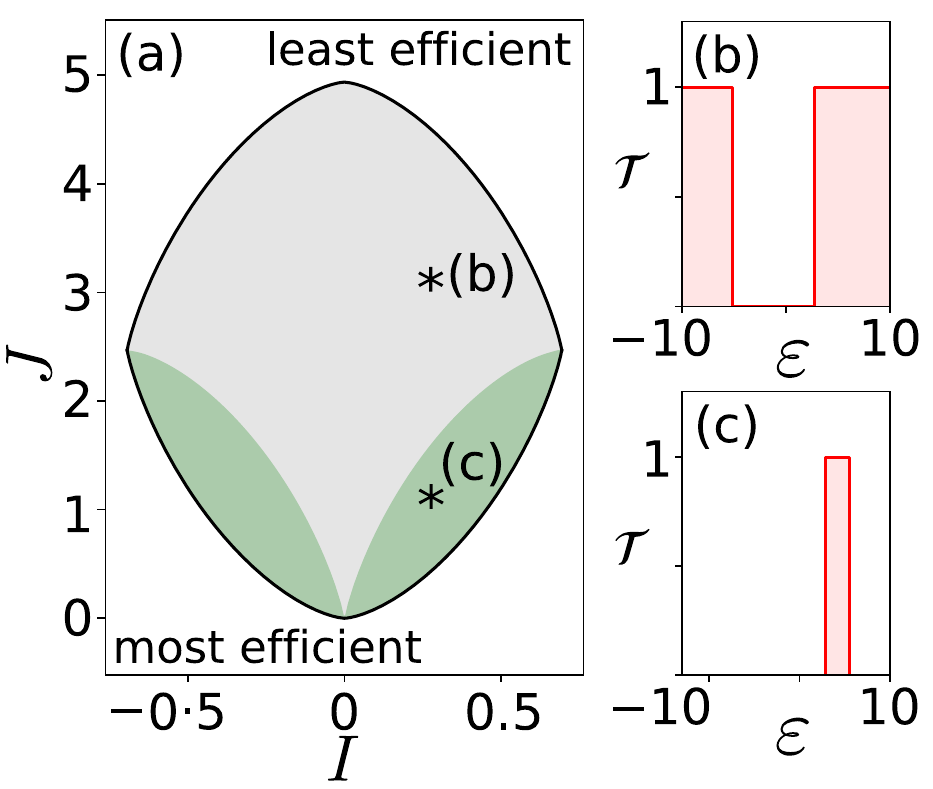}
    \includegraphics[height=4.1cm]{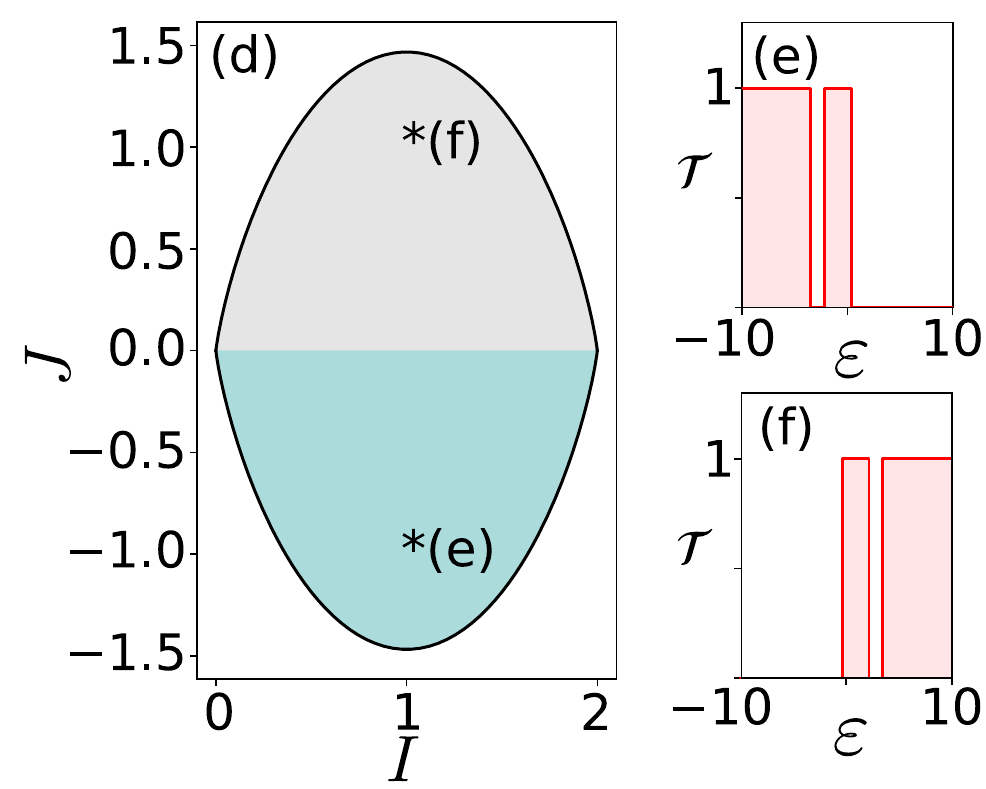}
    \includegraphics[height=4.1cm]{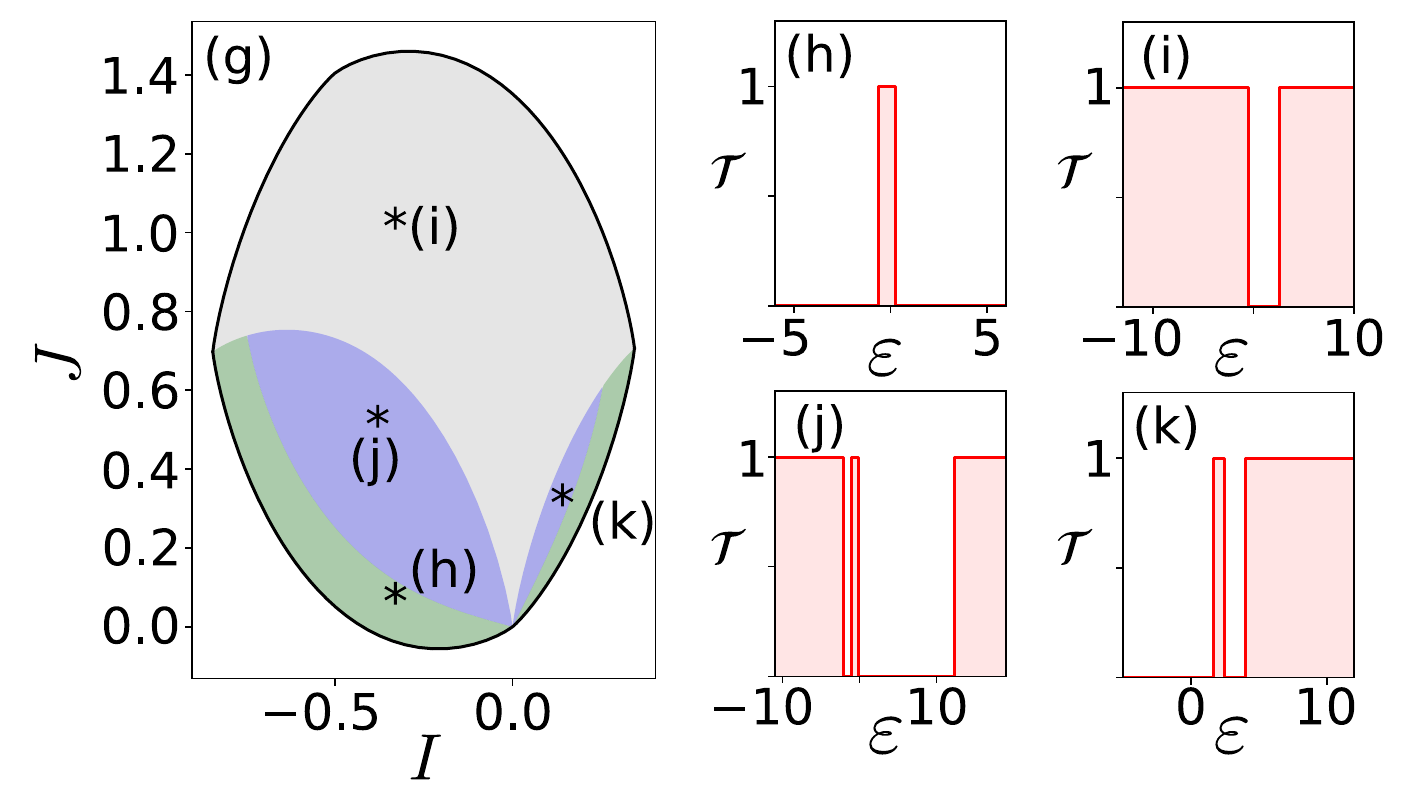}
    \caption{Examples of possible boxcar configurations. 
    The large plots represent the allowed value\fix{}{s} of $I$ and $J$ for different sets $(T_L,T_R, \mu_L, \mu_R)$, with the colors separating regions where the  boxcars are topologically different (see text). 
    The small plots are examples of optimal boxcars in different regions.
    (a)-(c) $(T_L,T_R, \mu_L, \mu_R) = (2,1,0,0)$.
    (d)-(f) $(1, 1, -1,1)$
    (g)-(k) $(1,0.2,0.1,0.6)$.
    }
    \label{fig:example1}
\end{figure*}
Due to the lhs of Eq.~\eqref{box_equation} being always non-negative and finite, in this limit the condition reduces to 
$(\lambda \epsilon+\eta) \Delta f(\epsilon) \geqslant 0$. This expression changes sign twice, at $\epsilon_0$ and $\epsilon_1 = -\eta/\lambda$ (the actual value of $\epsilon_1$ can be determined implicitly as a function of $I$). 
The curve $J_{\rm min}(I)$ corresponds to a compact boxcar in the interval $[\epsilon_0, \epsilon_1]$, as illustrated in Fig.~\ref{fig:boundaries}(d). 
This~\fix{precisely}{curve comprises} the ``best thermoelectric\fix{}{s}'' (or ``most efficient'') in Ref.~\cite{Whitney2014}. 
Conversely, the curve $J_{\rm max}(I)$ (which~\fix{is}{contains} the ``worst thermoelectric\fix{}{s}'') is associated with the complementary boxcar, i.e. one which is 0 in $[\epsilon_0, \epsilon_1]$ and 1 otherwise (Fig.~\ref{fig:boundaries}(e)).
Interestingly, we therefore see that our results also encompass those of~\cite{Whitney2014} as a particular case \gtl{(also meaning that the transmission function that maximizes efficiency for a given particle current also minimizes fluctuations)}. 
However, we call attention to the fact that along these boundaries the system will not necessarily be operating as an engine. 
\fix{It will only do so in the region highlighted by a red thick line in}{The different possible working regimes are identified in} Fig.~\ref{fig:boundaries}(a).

For values of $I,J$  inside the region defined by $J_{\rm min/max}(I)$, the shape of the boxcar must be determined numerically from Eq.~\eqref{box_equation}. This is done by considering for a pair $(\lambda, \eta)$ the functions $\lambda \epsilon + \eta$ and $\mathcal{G}(\epsilon) = \nicefrac{g(\epsilon)}{\Delta f(\epsilon)}$. The \GGrev{intersection} between these functions \GGrev{provide the positions of the boundaries} of the optimal boxcar for given $(\lambda, \eta)$ (see figure \ref{fig:gD} for some examples). \GGrev{This insight also sheds  light on the intricate dependence between the parameters $(T_L, T_R, \mu_L, \mu_R)$, the constraints $(I, J)$ and the possible boxcar topology (e.g. open boundary or closed boxes, single or multiple boxes etc). This in fact depends on the different ways that the line $\lambda \epsilon + \eta$ and the graph of $\mathcal{G}(\epsilon)$ may intersect. In figure \ref{fig:gD} we illustrate this by connecting the color of the curve with the respective resulting transmission function.}

\gtl{Finally, calculating the currents for the parameters $(\lambda, \eta)$ where these qualitative changes can occur leads us to boundaries in the $I-J$ plane of regions inside of which the boxcars have a given qualitative shape.
Several examples are shown in Fig.~\ref{fig:example1}:
(a)-(c) corresponds to $\mu_L = \mu_R$; 
(d)-(f) to $T_L = T_R$;
and (g)-(k) to a bias in both affinities.
In these graphs the regions corresponding to each possible boxcar topology are represented by different colors in  Figs.~\ref{fig:example1}(a),(d),(g). 
}

The largest number of boxcars we have been able to find is 3, as in Fig.~\ref{fig:example1}(j).
This is related to the type of constraints we are imposing. 
If the variance of some other current is studied, or if additional linear constraints are imposed (e.g. in the case of spin-dependent transport channels), a more complex boxcar topology can in principle occur.

As a more concrete example, we can once again consider the double quantum dots and ask ourselves what transmission functions would yield the same currents with optimal variance and in the presence of the same gradients. Since we had $\beta_L = \beta_R = \beta$ and $\mu_R = -\mu_L = \delta_\mu/2$, this scenario is located exactly at the boundary between the gray and cyan regions in Fig.~\ref{fig:example1}(d). 
The optimal transmission function is found to be a single boxcar of the form $[-a/2,a/2]$ 
\footnote{
The width $a$ can be determined by imposing that the current $I$ obtained from $\mathcal{T}_d$ must be the same as that obtained using the boxcar. 
Hence, $a$ must be a solution of 
$\int_{-a/2}^{a/2} d\epsilon~\Delta f(\epsilon) = \int_{-\infty}^\infty d\epsilon ~\mathcal{T}_d(\epsilon)\Delta f(\epsilon)$, 
which generally depends on both $\beta$ and $\delta_\mu$.
}.
For boxcars of this form, it is actually possible to determine the optimal Fano factor $F_{\rm opt} = \Delta_{I, {\rm opt}}^2/|I|$ analytically, by directly computing the integrals in Eqs.~\eqref{IJ} and~\eqref{var}.
The result is 
$F_{\rm opt} =  2(1-f_L-f_R)/[\ln f_R(1-f_R)f_L(1-f_L)]$,
where $f_{L(R)} = f_{L(R)}(a/2)$ are the Fermi-Dirac functions of each bath, evaluated at the left (right)-end of the boxcar.

\section{Discussion}

In this paper we have provided a definitive answer to the question of TUR violations in~\fix{}{non-interacting} quantum thermoelectrics. 
We showed that, beyond linear response, no bound exists which relate $\Delta_I^2$ only to $I$ and $\sigma$; instead, the trade-off relation involving these quantities becomes dependent on the system parameters.
Our approach addresses this issue by determining what is the optimal process; 
i.e., out of all possible processes allowed in Nature, which one yields the smallest possible variance for a fixed $I$ and $\sigma$? 
We believe this represents a very insightful question. 
First, it yields more general bounds, not necessarily related only to $I$ and $\sigma$. 
And second, and most importantly, it actually tells us which process is the optimal one.

Recently, there has been growing interest in this kind of approach.
Irrespective of whether or not achieving the optimal process is easy, knowing what it is provides a benchmark which must be satisfied by any other process. 
For instance, Ref.~\cite{Cavina2016} determined the probability distribution maximizing work extraction in the single shot-scenario. 
This was later studied experimentally in~\cite{Maillet2019}, which performed a process that was not exactly optimal, but got extremely close. 

The design of transmission functions in quantum thermoelectrics is a relevant technological problem. 
Our results introduce fluctuations as a new ingredient to the mix. 
It provides, to our knowledge, the only known route for designing transmission functions with target power and efficiency, but also minimizing fluctuations. 

\emph{\textbf{Acknowledgements--}} 
GTL acknowledges the financial support of the  S\~ao Paulo Funding Agency FAPESP (Grants No. 2017/50304-7, 2017/07973-5 and 2018/12813-0), the Gridanian Research Council (GRC), and the Brazilian funding agency CNPq
(Grant No. INCT-IQ 246569/2014-0). G. G. acknowledges support from FQXi and DFG FOR2724 and also from the European Union Horizon 2020 research and innovation programme under the Marie Sklodowska-Curie grant agreement No. 101026667.


\appendix

\widetext




\section{Proof of Theorems 1 and 2}
\label{app:Proof}
 
The problem consists in minimizing the variance $\Delta_I^2$ [Eq.~(3)] subject to a set of linear (in $\mathcal{T}$) constraints; in our case, fixed $I$ and $J$. 
Formally, this may be written as the following mathematical optimization problem:
\begin{framed}
\[
\mbox{Minimize } \int_{-\infty}^{\infty} f(x) \beta(x) - f(x)^2 \gamma(x)^2 \ud x, \quad\mbox{subject to}
\]
\begin{equation}
\int_{-\infty}^{\infty} f(x) \alpha_i(x) \ud x = \phi_i,\,\,\, i = 1,\ldots, n\quad\mbox{and}\quad 0 \leq f(x) \leq 1\,\,\forall\,\, x\in\mathds{R}.
\label{eq:opt-def}
\end{equation}
\end{framed}
This is a concave functional of $f(x)$, which therefore requires specific methods in order to be tackled. 

In what follows, we start by fixing the necessary notations and definitions employed throughout this Supplementary Material.
The proofs of Theorems 1 and 2 are then given in Secs.~\ref{supmat:theorem1} and~\ref{supmat:theorem2}, \GGrev{respectively}. 
\GGrev{A series of additional} remarks and details are finally provided in Appendix Sec.~\ref{app:misc}. 

\subsection{Notations, Definitions and Standing Hypotheses}

To make things precise, we make the following hypothesis and definitions:
\begin{itemize}
\item We will denote by $\widetilde{\mathcal{R}}$ the set of functions $f:\mathds{R}\rightarrow \mathds{R}$ that are bounded and such that 
for every compact interval $[a,b]$, $f$ has only finitely many discontinuities in $[a,b]$. Furthermore, we will denote by $\mathcal{R}$ the subset of $\widetilde{\mathcal{R}}$ where~\fix{$\mathrm{Im}(f)\subseteq [0,1]$.}{$0\leq f(x) \leq 1\,\forall\,x$.}
\item $\alpha_i$, $\beta$ and $\gamma$ will always denote functions in $\widetilde{\mathcal{R}}$, such that
\[
\int_{-\infty}^{\infty} |\alpha_i(x)| \ud x, \quad\int_{-\infty}^{\infty} |\beta(x)| \ud x\quad\mbox{and}\quad \int_{-\infty}^{\infty} \gamma(x)^2 \ud x
\]
are all finite and such that $\alpha_i(x) \neq 0$, $\beta(x) \neq 0$ and $\gamma(x) \neq 0$ almost everywhere.
\item We define the following functionals acting on functions in $\mathcal{R}$:
\[
\mathcal{C}_i[f] = \int_{-\infty}^{\infty} f(x) \alpha_i(x) \ud x,\quad i = 1,\ldots, n,\quad\quad\quad\quad\quad\quad \mathcal{Q}[f] = \int_{-\infty}^{\infty} f(x) \beta(x) - f(x)^2 \gamma(x)^2 \ud x,
\]\[
\mathcal{L}[f] = \int_{-\infty}^{\infty} f(x) (\beta(x) - \gamma(x)^2) \ud x, \quad\quad\quad\mbox{and}\quad\quad\quad \mathcal{B}[f] = \int_{-\infty}^{\infty} \gamma(x)^2 f(x)(1-f(x)) \ud x.
\]
The functionals $\mathcal{C}_i$ are the ones that give us the constraints in (\ref{eq:opt-def}), while $\mathcal{Q}$ is the one to be optimized. As we will see further on, the functional $\mathcal{L}$ can be regarded as a linearized version of $\mathcal{Q}$. Finally, $\mathcal{B}[f]$ can be used as a measure of how far $f$ is from being a boxcar.
\item We will be using the following jargon from optimization theory \cite{lin-prog, vanderberghe}:
\begin{itemize}
\item A feasible point (function) of an optimization problem is a point (function) that obeys all the constraints.
\item The feasible region is the set of all feasible points (functions).
\item The optimal value is the value of the desired extremum.
\item An optimal point (function) is a feasible point (function) that attains the desired extremum.
\fix{}{\item In an optimization problem, the function/quantity we want to either minimize or maximize is called the objective function.}
\end{itemize}
\item The feasible region for the problem (\ref{eq:opt-def}) will be denoted $\mathcal{F}$:
\[
\mathcal{F} = \left\{f\in\mathcal{R}\mbox{ such that }\mathcal{C}_i[f] = \phi_i\,\forall\,i\right\}.
\]
\item We will denote by $\mathsf{Q}$ and $\mathsf{L}$ the optimal values  minimizing $\mathcal{Q}$ and $\mathcal{L}$ respectively, given the constraints $\mathcal{C}_i$:
\[
\mathsf{Q} = \inf_{f\in\mathcal{F}} \mathcal{Q}[f] \quad\quad\mbox{and}\quad\quad 
\mathsf{L} = \inf_{f\in\mathcal{F}} \mathcal{L}[f].
\]
Note that the hypothesis made about $\beta$ and $\gamma$ imply that $\mathsf{Q}$ and $\mathsf{L}$ must be finite when $\mathcal{F} \neq \varnothing$.
\item We will denote by $\mathcal{D}(F)$ the set of discontinuities of a function $F:\mathds{R}\rightarrow\mathds{R}$.
\item Finally, we recall the definition of oscillation of a function $f$ in an interval (as used in Analysis):
\[
\omega_f(I) = \sup_{x\in I} f(x) - \inf_{x\in I} f(x).
\]

\end{itemize}


\subsection{\label{supmat:theorem1}Proof of theorem 1}

We start by proving theorem 1\fix{, n}{. N}amely\fix{}{:}

\begin{theorem}
The transmission function $\mathcal{T}(\epsilon)$ which minimizes $\Delta_I^2$ [Eq.~\eqref{var}] for any number of linear constraints is a collection of boxcars, with $\mathcal{T}(\epsilon)$ being either 0 or 1; that is, 
\[
\mathcal{T}_{\rm opt}(\epsilon) = \sum\limits_i \theta(\epsilon-a_i) \theta(b_i - \epsilon),
\]
where $\theta(x)$ is the Heaviside function and $a_i, b_i \in [-\infty,+\infty]$ are the boxcar\fix{s}{} boundary points that are fixed by the linear constraints in question.
\label{teor:boxcar}
\end{theorem}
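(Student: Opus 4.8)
The plan is to convert the concave minimization into a linear one by exploiting that $f^{2}=f$ precisely for $\{0,1\}$-valued functions. The starting observation is the identity
\[
\mathcal{Q}[f]=\mathcal{L}[f]+\mathcal{B}[f],\qquad
\mathcal{B}[f]=\int_{-\infty}^{\infty}\gamma(x)^{2}\,f(x)\bigl(1-f(x)\bigr)\,\ud x\ \geq\ 0 ,
\]
valid for every $f\in\mathcal{R}$, where $\mathcal{B}[f]=0$ if and only if $f$ is almost everywhere a collection of boxcars (here one uses $0\le f\le1$ and the standing hypothesis $\gamma\neq0$ a.e.). In particular $\mathcal{L}$ and $\mathcal{Q}$ coincide on boxcars, so for every feasible $f$ one has $\mathcal{Q}[f]\ge\mathcal{L}[f]\ge\mathsf{L}$, hence $\mathsf{Q}\ge\mathsf{L}$. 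It therefore suffices to produce a boxcar $f^{\star}\in\mathcal{F}$ with $\mathcal{L}[f^{\star}]=\mathsf{L}$: then $\mathsf{Q}\le\mathcal{Q}[f^{\star}]=\mathcal{L}[f^{\star}]=\mathsf{L}$, so $\mathsf{Q}=\mathsf{L}$ and $f^{\star}$ is optimal; and for \emph{any} minimizer $f$ of $\mathcal{Q}$ we get $\mathcal{B}[f]=\mathcal{Q}[f]-\mathcal{L}[f]\le\mathsf{Q}-\mathsf{L}=0$, so $\mathcal{B}[f]=0$ and $f$ is itself a collection of boxcars. Thus the entire theorem — existence of a boxcar optimizer, and \emph{all} optimizers being boxcars — reduces to the bang-bang statement for the \emph{linear} functional $\mathcal{L}$ on $\{0\le f\le1\}$ subject to the $n$ constraints $\mathcal{C}_{i}[f]=\phi_{i}$.

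\textbf{The bang-bang step.} For this I would use an exchange argument, which is exactly what the auxiliary quantities $\mathcal{B}$ and $\omega_{f}$ are set up to control. If $f\in\mathcal{F}$ is not a.e.\ a boxcar, the set $E_{\delta}=\{x:\delta\le f(x)\le1-\delta\}$ has positive measure for some $\delta>0$; choosing $n+1$ disjoint positive-measure subsets of $E_{\delta}\cap[-M,M]$ one builds a nonzero bounded $g$ supported there with $\int g\,\alpha_{i}\,\ud x=0$ for $i=1,\dots,n$. Along $t\mapsto f+tg$ all constraints are preserved, $f+tg$ stays in $[0,1]$ on a nontrivial $t$-interval, and
\[
\mathcal{Q}[f+tg]=\mathcal{Q}[f]+t\!\int_{-\infty}^{\infty} g\,(\beta-2f\gamma^{2})\,\ud x-t^{2}\!\int_{-\infty}^{\infty} g^{2}\gamma^{2}\,\ud x
\]
is a concave parabola in $t$ (leading coefficient $\le0$), so its minimum over the feasibility interval is attained at an endpoint, where $f+tg$ has been pushed onto $\{0,1\}$ over a set of positive measure — i.e.\ $\mathcal{B}$ has strictly decreased while $\mathcal{Q}$ has not increased. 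Combining this monotone improvement with a compactness input — either weak-$*$ compactness of the bounded set $\mathcal{F}\subseteq L^{\infty}$, or directly Lyapunov's convexity theorem applied to $f\mapsto(\mathcal{L}[f],\mathcal{C}_{1}[f],\dots,\mathcal{C}_{n}[f])$, whose range over $\{0\le f\le1\}$ is convex, compact, and already attained on $\{0,1\}$-valued functions — delivers a bang-bang minimizer $f^{\star}$ of $\mathcal{L}$. Finally, a $\{0,1\}$-valued function in $\mathcal{R}$ has finitely many discontinuities on every compact interval, so it is exactly of the form $\sum_{i}\theta(\epsilon-a_{i})\theta(b_{i}-\epsilon)$, which closes the argument.

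\textbf{Main obstacle.} The two steps I expect to require genuine care are: (i) the \emph{attainment} of $\mathsf{L}$ — the exchange move only shows one can keep improving $\mathcal{B}$ without damaging $\mathcal{Q}$, not that the procedure converges, so the compactness/Lyapunov argument is really needed, not cosmetic; and (ii) ensuring the resulting optimizer actually sits in the admissible class $\mathcal{R}$, i.e.\ has only \emph{locally finitely many} jumps and is not merely some measurable $\{0,1\}$-valued function. Point (ii) is precisely what upgrades ``$f^{2}=f$ a.e.'' to the literal claim ``collection of boxcars $\sum_i\theta(\epsilon-a_i)\theta(b_i-\epsilon)$'', and it is the reason the proof is organized through the linearized functional $\mathcal{L}$ and the boxcar-defect functional $\mathcal{B}$ rather than attacked by naive Lagrange multipliers, which (as noted in the main text) fail for concave objectives.
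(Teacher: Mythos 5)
Your reduction --- writing $\mathcal{Q}=\mathcal{L}+\mathcal{B}$ with $\mathcal{B}\geq 0$ vanishing exactly on $\{0,1\}$-valued functions, and deducing that every minimizer of $\mathcal{Q}$ must be a boxcar once $\mathsf{Q}=\mathsf{L}$ is established --- is precisely the skeleton of the paper's proof, and that part is correct. The genuine gap is in how you certify $\mathsf{Q}\leq\mathsf{L}$. You propose to exhibit a bang-bang minimizer $f^{\star}$ of $\mathcal{L}$ via Lyapunov's convexity theorem (or weak-$*$ compactness), but Lyapunov delivers the indicator of an arbitrary measurable set: such an $f^{\star}$ need not lie in the admissible class $\mathcal{R}$ (locally finitely many discontinuities), hence need not lie in $\mathcal{F}$, and therefore cannot be used in the inequality $\mathsf{Q}\leq\mathcal{Q}[f^{\star}]$, since $\mathsf{Q}$ is an infimum over $\mathcal{F}$ only. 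You flag this yourself as obstacle (ii), but it is not a cosmetic regularity issue: repairing it requires approximating $f^{\star}$ by finite unions of intervals while exactly restoring the $n$ equality constraints, which needs a separate argument. The exchange step has the same problem in miniature: at the endpoint of the feasibility interval of $t\mapsto f+tg$ one only learns that the essential range of $f+tg$ touches $\{0,1\}$, not that a set of positive measure is pushed onto $\{0,1\}$, so the claimed strict decrease of $\mathcal{B}$ per step is not established and, as you concede, there is no convergence argument for the iteration.

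The paper closes this gap by never requiring attainment at all. For an arbitrary $f\in\mathcal{F}$ and any $\varepsilon>0$ it partitions a large compact interval into subintervals on which the oscillation of $f$ is below $\eta$, shifts the value of $f$ on each such subinterval by a parameter $m_j\lambda_j$, and thereby obtains a \emph{finite-dimensional} concave (resp.\ linear) program over the polytope $[0,1]^{N}$ cut by the $n$ constraint hyperplanes. Its optimum sits at a vertex, where at least $N-n$ coordinates are $0$ or $1$; the resulting $F_{\xi}$ is still in $\mathcal{F}$ (the constraints are re-imposed exactly inside the finite-dimensional program), satisfies $\mathcal{Q}[F_{\xi}]\leq\mathcal{Q}[f]$ (or $\mathcal{L}[F_{\xi}]\leq\mathcal{L}[f]$), and has $\mathcal{B}[F_{\xi}]\leq\varepsilon$. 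This yields $\mathsf{Q}=\mathsf{L}$ together with the quantitative statement that $\mathcal{Q}[f]\leq\mathsf{Q}+\delta$ implies $\mathcal{B}[f]\leq\delta$, which is what the paper actually proves; boxcar-ness of exact minimizers, when they exist, is the $\delta=0$ case. If you wish to keep the Lyapunov route, you would need either to enlarge the feasible class to all measurable $[0,1]$-valued functions and prove the two infima coincide, or to invoke a refined bang-bang principle producing extreme points that are locally finite unions of intervals; neither is automatic under the paper's standing hypotheses.
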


This will actually be accomplished by proving the following statement (using the notations we just defined):

\begin{equation}
\text{If $f\in\mathcal{F}$ is such that $\mathcal{Q}[f] \leq \mathsf{Q} + \delta$, then $\mathcal{B}[f] \leq \delta$}.
\end{equation}

\begin{proof}
As a first step we will show that if $f \in \mathcal{F}$, then for all $\varepsilon > 0$ there exist $g, h \in \mathcal{F}$ such that $\mathcal{Q}[g] \leq \mathcal{Q}[f]$, $\mathcal{L}[h] \leq \mathcal{L}[f]$ and $\mathcal{B}[g], \mathcal{B}[h] \leq \varepsilon$. To see this, let us consider $\eta > 0$, an arbitrary positive number and $[a,b]$ a compact interval. Since $f\in\mathcal{R} $, then $\mathcal{D}(f) \cap [a,b]$ is finite and there exist $D$~\fix{}{(with $D\,= |\mathcal{D}(f) \cap [a,b]|)$} open intervals $I_1, \ldots, I_D$, all of which have measures less than $\eta$ and such that if $\mathcal{I} = \bigcup_i I_i$, then $\mathcal{D}(f) \cap [a,b] \subseteq \mathcal{I}$. As a consequence $f$ has no discontinuities on $\mathcal{J} = [a,b]\setminus\mathcal{I}$. As such, every point $x\in\mathcal{J}$ has an open interval~\fix{$N_x$}{$U_x$} containing it, such that the oscillation of $f$ in this interval,~\fix{$\omega_f(N_x)$}{$\omega_f(U_x)$} is less than $\eta$. Since the~\fix{$N_x$}{$U_x$} form an open cover of $\mathcal{J}$ and $\mathcal{J}$ is compact, then there exists a subcover that is finite: $M_1, \ldots, M_C$.

Let us consider a partition $\Pi$ of the interval $[a,b]$, such that all the endpoints of the $I_k$ and the $M_k$ intervals that lie in $[a,b]$ are included and none of the subintervals has a measure larger than $\eta$~\fix{}{(see figure \ref{fig:partition1} for an illustration of such partition)}. We will denote by $J_i$ the $N$ open subintervals of $\Pi$ such that $\omega_f < \eta$ and by $K_i$ the remaining ones. Note that the measure of the union of the $J_i$ must be at least $b-a-D\eta$ and hence the measure for the union of the $K_i$ is at most $D\eta$.

\clearpage

\begin{figure}[htbp!]
 \centering
    \includegraphics[width=0.6\textwidth]{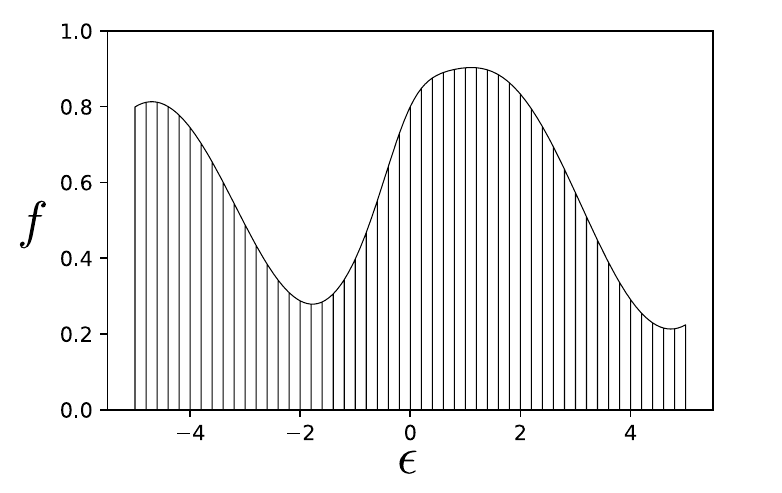}
    \caption{\fix{}{A possible partition for a function $f$ in $\mathcal{R}$. The interval $[a,b]$ being considered is $[-5,5]$ (we are ignoring $f$ outside of this interval).}}
    \label{fig:partition1}
\end{figure}

We then define the constants
\[
f^{(-)}_i = \inf_{x\in J_i} f(x),\quad\quad f^{(+)}_i = \sup_{x\in J_i} f(x), \quad\quad\mbox{and}\quad\quad m_i = 1 - f^{(+)}_i + f^{(-)}_i,
\]
the integrals

\[
a_{i,j} = \int_{J_j} \alpha_i(x) \ud x,\quad\quad\quad\quad\quad\quad
b_j = - \int_{J_j} f^{(-)}_j (\beta(x) + (f^{(-)}_j - 2f(x))\gamma(x)^2) \ud x,
\]\[
c_j = m_j \int_{J_j} (\beta(x) + 2(f^{(-)}_j - f(x))\gamma(x)^2) \ud x,\quad\quad\quad\quad\quad\quad
d_j = m_j^2 \int_{J_j} \gamma(x)^2 \ud x,
\]\[
b'_j = - \int_{J_j} f^{(-)}_j (\beta(x) - \gamma(x)^2) \ud x,\quad\quad\quad\quad\quad\quad c'_j = m_j \int_{J_j} (\beta(x)-\gamma(x)^2) \ud x,
\]
and the functions

\[
F_{\lambda}(x) = \left\{
\begin{array}{ll}
f(x) - f^{(-)}_i + m_i \lambda_i & \mbox{if }x\in J_i \\
f(x) & \mbox{otherwise}
\end{array}.
\right.
\]
It follows that if $0 \leq \lambda_i \leq 1$\fix{}{$\forall\,i$}, then $F_{\lambda}\in\mathcal{R}$ and we also have
\[
\mathcal{C}_i[F_{\lambda}] = \phi_i + \sum_j a_{i,j} (m_j \lambda_j - f^{(-)}_j),\quad\quad\quad\quad\quad\quad
\mathcal{Q}[F_{\lambda}] = \mathcal{Q}[f] + \sum_j (b_j + c_j \lambda_j - d_j \lambda_j^2)\quad\quad\mbox{and}
\]\[
\mathcal{L}[F_{\lambda}] = \mathcal{L}[f] + \sum_j (b'_j + c'_j \lambda_j).
\]
Furthermore, if $\mu_i = \nicefrac{f^{(-)}_i}{m_i}$ for every $i$ such that $m_i\neq 0$ ($\mu_i$ can be any value in $[0,1]$ if $m_i = 0$), then we have $F_{\mu} = f$. Consider then the following concave programming problems:

\begin{framed}
Minimize
\[
\sum_j (b_j + c_j \lambda_j - d_j \lambda_j^2) \mbox{ subject to the constraints}
\]
\begin{equation}
\sum_j a_{i,j} (m_j \lambda_j - f^{(-)}_j) = 0\quad\quad\mbox{and}\quad\quad 0\leq\lambda_i\leq 1\,\forall\, i.
\label{concave-programm-2}
\end{equation}
\end{framed}

and

\begin{framed}
Minimize
\[
\sum_j (b'_j + c'_j \lambda_j) \mbox{ subject to the constraints}
\]
\begin{equation}
\sum_j a_{i,j} (m_j \lambda_j - f^{(-)}_j) = 0\quad\quad\mbox{and}\quad\quad 0\leq\lambda_i\leq 1\,\forall\, i.
\label{linear-programm}
\end{equation}
\end{framed}

Since $F_{\mu} = f$, then $(\mu_1, \ldots, \mu_N)$ is a feasible point for both (\ref{concave-programm-2}) and (\ref{linear-programm}) corresponding in both cases to a value 0 for the objective function. As a consequence, if $\nu = (\nu_1, \ldots ,\nu_N)$ and $\sigma = (\sigma_1, \ldots ,\sigma_N)$ are optimal points for (\ref{concave-programm-2}) and (\ref{linear-programm}) respectively, it follows that $\mathcal{C}_i[F_{\nu}] = \mathcal{C}_i[F_{\sigma}] = \phi_i$, while $\mathcal{Q}[F_{\nu}]\leq \mathcal{Q}[f]$ and $\mathcal{L}[F_{\sigma}]\leq \mathcal{L}[f]$. So we need to show that we can choose $\eta$ and $[a,b]$ in such a way that $\mathcal{B}[F_{\nu}], \mathcal{B}[F_{\sigma}] \leq \varepsilon$. In what follows $\xi$ means either $\nu$ or $\sigma$ (the argument is identical). To see why we can make $\mathcal{B}[F_{\xi}] \leq \varepsilon$, note that in both cases the feasible region $R$ will be the intersection of a hypercube $[0,1]^N$ and $n$ hyperplanes defining the linear constraints. Since there is always an extremal point of the feasible region where the optimum of a concave problem is attained \cite{tuy-horst}~\fix{}{(an intuitive argument for this was provided in the main text)}, then one needs only to look for the extremal points of $R$. Since $R$ is a polytope, the extremal points are the vertexes, which in this case are all points where at least $N-n$ coordinates are either 0 or 1, implying that $\xi$ has at least $N-n$ coordinates that are either 0 or 1~\fix{}{(see figure \ref{fig:polytope} for an illustration).}

\begin{figure}[htbp!]
 \centering
    \includegraphics[width=0.4\textwidth]{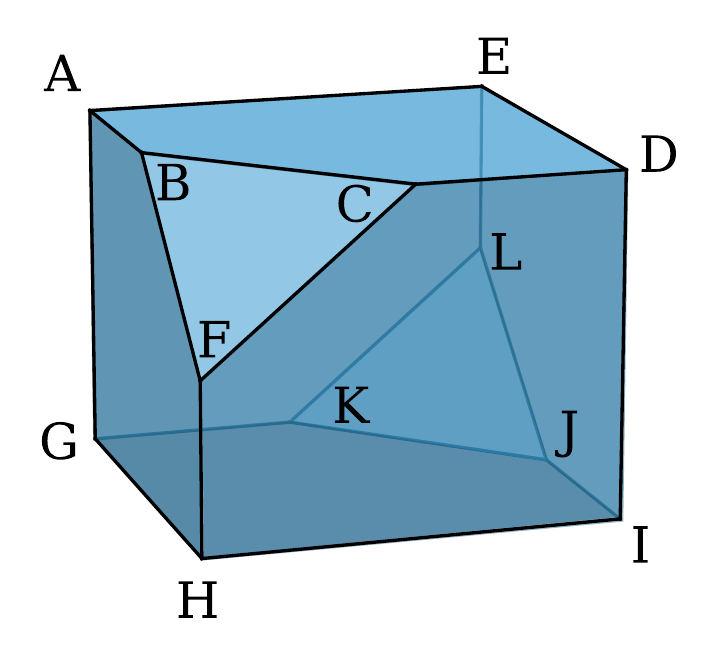}
    \caption{\fix{}{The intersection of the hypercube $[0,1]^4$ and the hyperplane $x + 2y - z + w = \nicefrac{3}{2}$ embedded in 3 dimensions. The vertex list of this polytope is: $A=(0,\nicefrac{3}{4},0,0)$, $B=(0,1,\nicefrac{1}{2},0)$, $C=(\nicefrac{1}{2},1,1,0)$, $D=(1,\nicefrac{3}{4},1,0)$, $E=(1,\nicefrac{1}{4},0,0)$, $F=(0,1,1,\nicefrac{1}{2})$, $G=(0,\nicefrac{1}{4},0,1)$, $H=(0,\nicefrac{3}{4},1,1)$, $I=(1,\nicefrac{1}{4},1,1)$, $J=(1,0,\nicefrac{1}{2},1)$, $K=(\nicefrac{1}{2},0,0,1)$, $L=(1,0,0,\nicefrac{1}{2})$. An important point is that in all vertices at most one coordinate is different from 0 or 1, corresponding to the single hyperplane used.}}
    \label{fig:polytope}
\end{figure}

In turn, this implies that
\begin{itemize}
\item If $x\in J_i$ and $\xi_i = 0$, then $0 \leq F_{\xi}\fix{}{(x)} \leq \eta$;
\item If $x\in J_i$ and $\xi_i = 1$, then $1- \eta \leq F_{\xi}\fix{}{(x)} \leq 1$;
\item If $x\in J_i$ but $\xi_i \neq 0, 1$, then we can only say $0 \leq F_{\xi}\fix{}{(x)} \leq 1$.
\end{itemize}

Assuming $\eta<\nicefrac{1}{2}$, it follows that
\[
\mathcal{B}[F_{\xi}] = \int_{-\infty}^{\infty} \gamma(x)^2 F_{\xi}(x)(1-F_{\xi}(x)) \ud x = \underbrace{\int_{-\infty}^{a} \ldots + \int_b^{\infty} \ldots}_{\mathcal{B}_1} + \underbrace{\sum_i \int_{K_i} \ldots}_{\mathcal{B}_2} + \underbrace{\sum_j \int_{J_j} \ldots}_{\mathcal{B}_3},
\]
where we simply broke the integral over different intervals. Since the integral $\int_{-\infty}^{\infty} \gamma(x)^2 \ud x$ is finite (from our hypotheses) and $0 \leq F_{\xi}(x)(1-F_{\xi}(x)) \leq 1$, then it is trivial that we can choose $[a,b]$ such that $\mathcal{B}_1 \leq \nicefrac{\varepsilon}{3}$ by choosing a sufficiently large interval. Since the measure of $\bigcup_i K_i$ is at most $D\eta$ and $\gamma(x)^2$ is bounded (again from our hypotheses), then one can easily bound $\mathcal{B}_2$:

\[
\mathcal{B}_2 = \sum_i \int_{K_i} \gamma(x)^2 F_{\xi}(x)(1-F_{\xi}(x)) \ud x \leq \sum_i \int_{K_i} \Gamma \ud x \leq D\eta\Gamma,
\]
where $\Gamma$ is any upper bound for $\gamma(x)^2$. So we can take $\mathcal{B}_2 \leq \nicefrac{\varepsilon}{3}$ by choosing a sufficiently small $\eta$ (namely $\eta \leq \nicefrac{\varepsilon}{3D\Gamma}$). Finally, $\mathcal{B}_3$ can be bounded in a similar way, because if $\eta < \nicefrac{1}{2}$ and $x\in J_i$ (where $\xi_i = 0$ or 1), then $F_{\xi}(x)(1-F_{\xi}(x)) \leq \eta(1-\eta) \leq \eta$. Furthermore, the measure of the $J_i$ intervals where $\xi_i \neq 0, 1$ is at most $n\eta$. So:

\[
\mathcal{B}_3 = \sum_j \int_{J_j} \gamma(x)^2 F_{\xi}(x)(1-F_{\xi}(x)) \ud x = \!\!\sum_{\substack{j \\ \xi_j = 0\text{ or }1}} \int_{J_j} \gamma(x)^2 F_{\xi}(x)(1-F_{\xi}(x)) \ud x + \!\!\sum_{\substack{j \\ \xi_j \neq 0, 1} }\int_{J_j} \gamma(x)^2 F_{\xi}(x)(1-F_{\xi}(x)) \ud x \leq
\]\[
\leq \sum_{\substack{j \\ \xi_j = 0\text{ or }1}} \int_{J_j} \Gamma \eta \ud x + \!\!\sum_{\substack{j \\ \xi_j \neq 0, 1} }\int_{J_j} \Gamma\ud x \leq \Gamma\eta\left(b -a + n\right).
\]
So we can take $\mathcal{B}_3\leq \nicefrac{\varepsilon}{3}$ by making $\eta$ sufficiently small ($\eta \leq \nicefrac{\varepsilon}{3\Gamma(b-a+n)}$). It follows that by taking $[a,b]$ large enough and $\eta \leq \min\left\{\nicefrac{1}{2}, \nicefrac{\varepsilon}{3D\Gamma}, \nicefrac{\varepsilon}{3\Gamma(b-a+n)}\right\}$ we get $\mathcal{B}[F_{\xi}] \leq \varepsilon$ (so $g = F_{\nu}$ and $h = F_{\sigma}$)\fix{}{. The interpretation of this is that the solutions of the optimization problems (\ref{concave-programm-2}) and (\ref{linear-programm}) lead to functions $F_{\lambda}$ that obey the same constraints as $f$, while having a better objective (in the interpretation of the main paper, $F_{\lambda}$ is a transmission function with the same currents but a smaller variance) and being closer to a boxcar function (because of the structure of the solution). This is illustrated in figure \ref{fig:partition2}.}

\begin{figure}[htbp!]
 \centering
    \includegraphics[width=0.48\textwidth]{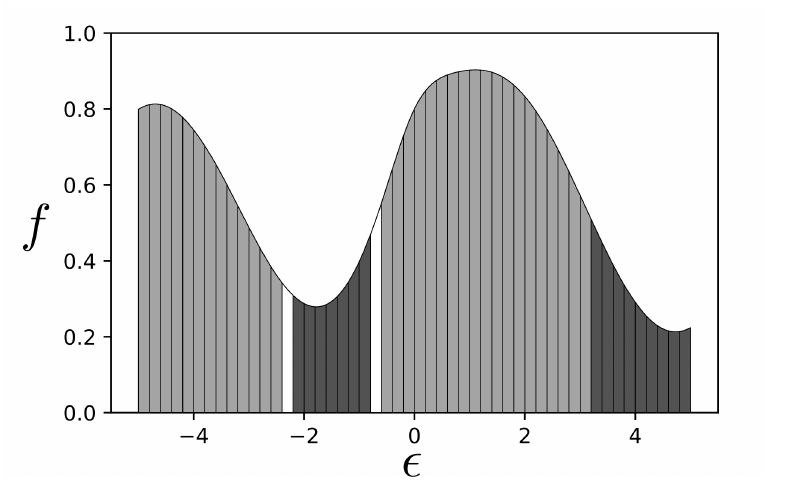}
    \quad
    \includegraphics[width=0.48\textwidth]{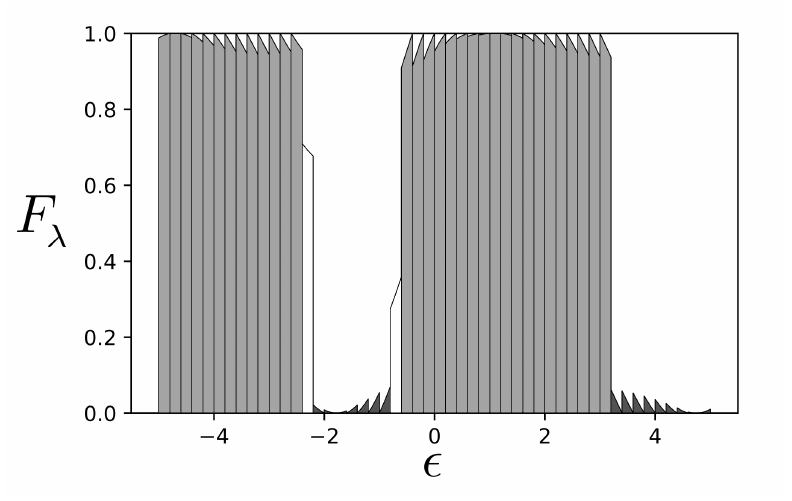}
    \caption{\fix{}{Starting with the function $f$ and the partition in figure \ref{fig:partition1}, we consider a hypothetical solution for a problem with 2 constraints. In light gray, we have the intervals corresponding to $\lambda_i = 1$, in dark gray, the intervals corresponding to $\lambda_i = 0$ and in white for $\lambda_i \neq 0, 1$ (there are at most 2 such intervals, since there are only 2 constraints). To the left we have the initial function $f$ and to the right, the optimized function $F_{\lambda}$ that is clearly closer to a boxcar collection.}}
    \label{fig:partition2}
\end{figure}

The second step in the proof is that this implies that $\mathsf{Q} = \mathsf{L}$. Let $\delta, \varepsilon > 0$ and let $u\in\mathcal{F}$ be such that $\mathcal{L}[u] \leq \mathsf{L} + \delta$. By what we just proved, there exists $U\in\mathcal{F}$ such that $\mathcal{L}[U] \leq \mathcal{L}[u] \leq \mathsf{L} + \delta$ and $\mathcal{B}[U] \leq \varepsilon$. We now call attention to the fact that $\mathcal{Q} = \mathcal{L} + \mathcal{B}$, so we have

\[
\mathsf{Q} \leq \mathcal{Q}[U] = \mathcal{L}[U] + \mathcal{B}[U] \leq \mathsf{L} + \delta + \varepsilon.
\]
Since this is true for any $\delta, \varepsilon > 0$ it implies $\mathsf{Q} \leq \mathsf{L}$. On the other hand, if $v\in\mathcal{F}$ is such that $\mathcal{Q}[v] \leq \mathsf{Q} + \delta$, then there exists $V\in\mathcal{F}$ such that $\mathcal{Q}[V] \leq \mathcal{Q}[v] \leq \mathsf{Q} + \delta$ and $\mathcal{B}[V] \leq \varepsilon$. So

\[
\mathsf{L} \leq \mathcal{L}[V] = \mathcal{Q}[V] - \mathcal{B}[V] \leq \mathsf{Q} + \delta
\]
because $\mathcal{B}[V] \geq 0$. Once again, since this is true for any $\delta, \varepsilon > 0$ it implies $\mathsf{L} \leq \mathsf{Q}$ and hence that $\mathsf{L} = \mathsf{Q}$.

This leads us to the last step of the proof. Let $\phi\in\mathcal{F}$ be such that $\mathcal{Q}[\phi] \leq \mathsf{Q} + \delta$, then we have

\[
\mathcal{L}[\phi] + \mathcal{B}[\phi] \leq \mathsf{Q} + \delta \Rightarrow \mathcal{B}[\phi] \leq \delta + \mathsf{Q} - \mathcal{L}[\phi] = \delta + \mathsf{L} - \mathcal{L}[\phi] \leq \delta .
\]
\end{proof}

What we proved here is an actual proof of the statement in the main text, because it means that as we consider transmission functions that are closer and closer to being optimal ($\mathcal{Q}[\mathcal{T}]\rightarrow \mathsf{Q}$) we will be taking $\delta \rightarrow 0$ and hence these transmission functions are getting closer and closer to being a boxcar ($\mathcal{B}[\mathcal{T}]\rightarrow 0$). In particular, if $\mathcal{T}$ is an optimal function, then $\mathcal{Q}[\mathcal{T}] = \mathsf{Q}$ implying that $\mathcal{Q}[\mathcal{T}] \leq \mathsf{Q} + \delta$ for all $\delta \geq 0$, so $\mathcal{B}[\mathcal{T}]\leq \delta$ in this case would imply that $\mathcal{B}[\mathcal{T}] = 0$ and hence that $\mathcal{T}$ is a boxcar almost everywhere.

\subsection{\label{supmat:theorem2}Proof of theorem 2}

We now move our attention to the second theorem stated in the main text, namely

\begin{theorem}
The position of the optimal boxcars is determined by the regions in energy for which 
\[
g(\epsilon) \leqslant (\lambda \epsilon + \eta) \Delta f(\epsilon),
\]
where $\lambda$ and $\eta$ are Lagrange multipliers introduced to fix $I$ and $J$. 
\end{theorem}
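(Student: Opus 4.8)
The plan is to use Theorem~\ref{teor:boxcar} to linearize the problem and then apply the Karush--Kuhn--Tucker (KKT) conditions of linear programming \cite{lin-prog, vanderberghe} to the reduced problem. \textbf{Reduction to a linear functional.} By Theorem~\ref{teor:boxcar} an optimal $\mathcal{T}$ takes values in $\{0,1\}$ almost everywhere, so $\mathcal{T}^2 = \mathcal{T}$ a.e.\ and the quadratic term of Eq.~\eqref{var} collapses, giving $\Delta_I^2[\mathcal{T}] = \int d\epsilon\,\mathcal{T}(\epsilon) g(\epsilon)$ with $g(\epsilon) = f_L(1-f_L) + f_R(1-f_R) \geq 0$. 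The constraints $\int d\epsilon\,\mathcal{T}\Delta f = I$ and $\int d\epsilon\,\mathcal{T}\epsilon\Delta f = J$ are already linear. Moreover, since the objective is now linear, minimizing it over the relaxed set $\{0\le\mathcal{T}\le1\}$ with these two affine constraints has its optimum attained at an extreme (bang--bang) point, which by Theorem~\ref{teor:boxcar} coincides with the minimizer of the original quadratic problem. We are thus reduced to a semi-infinite linear program.

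\textbf{Lagrangian and pointwise minimization.} Introduce a multiplier $\eta$ for the $I$-constraint and $\lambda$ for the $J$-constraint, and form
\[
\mathcal{L}[\mathcal{T};\lambda,\eta] = \int d\epsilon\,\mathcal{T}(\epsilon)\bigl[g(\epsilon) - (\lambda\epsilon+\eta)\Delta f(\epsilon)\bigr] + \eta I + \lambda J .
\]
For fixed $(\lambda,\eta)$ this is a sum of mutually independent contributions, one per energy, each \emph{linear} in $\mathcal{T}(\epsilon)\in[0,1]$; its pointwise minimizer is therefore bang--bang: $\mathcal{T}(\epsilon)=1$ wherever the coefficient $g(\epsilon)-(\lambda\epsilon+\eta)\Delta f(\epsilon)$ is negative and $\mathcal{T}(\epsilon)=0$ wherever it is positive (the zero set has measure zero and is irrelevant). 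This is precisely the characterization in Eq.~\eqref{box_equation}, and it identifies the boxcar endpoints $a_i,b_i$ with the roots of $g - (\lambda\epsilon+\eta)\Delta f$.

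\textbf{Matching the constraints and global optimality.} Two points remain. First, $(\lambda,\eta)$ must be chosen so that the boxcar of the previous step actually realizes the prescribed $(I,J)$; this is solvability of the dual, which follows from strong duality for linear programs (no duality gap, via a Slater-type argument: the set cut out by the two affine constraints together with $0\le\mathcal{T}\le1$ has nonempty relative interior whenever it is nonempty) together with the monotonicity of $I$ in $\eta$ and of $J$ in $\lambda$ established in Sec.~\ref{supmat:misc}, which makes the map $(\lambda,\eta)\mapsto(I,J)$ onto the admissible region. Second, the resulting $\mathcal{T}_{\rm opt}$ is a \emph{global} minimizer: for any competitor $\mathcal{T}'$ with the same $(I,J)$, the two linear constraints force $\int d\epsilon\,(\mathcal{T}'-\mathcal{T}_{\rm opt})(\lambda\epsilon+\eta)\Delta f = 0$, whence
\[
\Delta_I^2[\mathcal{T}'] - \Delta_I^2[\mathcal{T}_{\rm opt}] = \int d\epsilon\,(\mathcal{T}'-\mathcal{T}_{\rm opt})\bigl[g - (\lambda\epsilon+\eta)\Delta f\bigr] \geq 0 ,
\]
since on $\{\mathcal{T}_{\rm opt}=1\}$ the bracket is $\leq 0$ while $\mathcal{T}'-1\leq 0$, and on $\{\mathcal{T}_{\rm opt}=0\}$ the bracket is $\geq 0$ while $\mathcal{T}'\geq 0$.

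\textbf{Main obstacle.} I expect the delicate step to be the dual side of the last point: showing that for every target $(I,J)$ in the feasible region there is a genuine optimal multiplier pair $(\lambda,\eta)$, allowing $\pm\infty$ at the boundary, i.e.\ combining absence of a duality gap with surjectivity of the multiplier-to-constraint map. This forces one to track how the roots of $g - (\lambda\epsilon+\eta)\Delta f$ — hence the number and placement of boxcars — vary with $(\lambda,\eta)$, which is exactly what produces the distinct boxcar ``topologies'' visible in Fig.~\ref{fig:example1} and is handled case by case in Sec.~\ref{supmat:misc}.
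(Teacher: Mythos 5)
Your proposal is correct and rests on the same underlying strategy as the paper's proof: use Theorem~\ref{teor:boxcar} to replace the quadratic functional by the linear one $\int \mathcal{T}(\epsilon) g(\epsilon)\,d\epsilon$, and then read off the optimal boxcar from the sign of $g(\epsilon) - (\lambda\epsilon+\eta)\Delta f(\epsilon)$ via Lagrangian/KKT duality. The execution differs in one respect worth noting: the paper first extremizes over the boxcar endpoints, then discretizes onto the partition generated by the roots of $g(\epsilon) = (\lambda\epsilon+\eta)\Delta f(\epsilon)$ and applies finite-dimensional KKT conditions to decide which cells are switched on, whereas you minimize the continuum Lagrangian pointwise (bang--bang) and supply an explicit global-optimality certificate via the inequality $\int (\mathcal{T}'-\mathcal{T}_{\rm opt})[g-(\lambda\epsilon+\eta)\Delta f]\,d\epsilon \geq 0$. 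That certificate is a genuine bonus: it makes the sufficiency direction explicit, which the paper leaves implicit in the (necessary) KKT conditions. One small repair: your displayed identity for $\Delta_I^2[\mathcal{T}']-\Delta_I^2[\mathcal{T}_{\rm opt}]$ holds with equality only when the competitor $\mathcal{T}'$ is itself $\{0,1\}$-valued; for a general $\mathcal{T}'\in[0,1]$ you need the extra step $\Delta_I^2[\mathcal{T}'] = \mathcal{L}[\mathcal{T}'] + \mathcal{B}[\mathcal{T}'] \geq \int \mathcal{T}' g\,d\epsilon$ (precisely the decomposition $\mathcal{Q}=\mathcal{L}+\mathcal{B}$ used in the proof of Theorem~\ref{teor:boxcar}), after which your chain of inequalities closes. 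Finally, both you and the paper defer the existence and surjectivity of the multiplier map $(\lambda,\eta)\mapsto(I,J)$ to the monotonicity remarks of Sec.~\ref{supmat:misc}, so you are not missing anything the paper actually supplies on that front.
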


\begin{proof}
Using theorem 1, it follows from Eqs.~(2) and (3) of the main text that, for a transmission function $\mathcal{T}$ that is optimal we must have
\[
I(a_1, b_1,\ldots, b_N) = \sum_i\int_{a_i}^{b_i} \Delta f(\epsilon) d\epsilon,\quad\quad J(a_1, b_1,\ldots, b_N) = \sum_i\int_{a_i}^{b_i} \Delta f(\epsilon) \epsilon d\epsilon \quad\quad\mbox{and}
\]\[
\Delta^2_{I,\mathrm{opt}}(a_1, b_1,\ldots, b_N) = \sum_i\int_{a_i}^{b_i} g(\epsilon) d\epsilon,
\]
where, recall, $\Delta f(\epsilon) = f_L(\epsilon) - f_R(\epsilon)$ and $g(\epsilon) = f_L(1-f_L) + f_R(1-f_R)$.
So if we consider a Lagrangian $L(a_1, b_1,\ldots, b_N;\lambda, \eta) = \Delta^2_{I,\mathrm{opt}} + \lambda(J-J_0) + \eta(I-I_0)$, then the extrema are solutions of the system

\[
\left\{
\begin{array}{l}
g(a_i) = (\lambda a_i + \eta) \Delta f(a_i) \\
g(b_i) = (\lambda b_i + \eta) \Delta f(b_i) \\
I(a_1, b_1,\ldots, b_N) = I_0 \\
J(a_1, b_1,\ldots, b_N) = J_0 \\
a_1 \leq b_1 \leq a_2 \leq b_2 \leq \ldots \leq a_N \leq b_N
\end{array}
\right.
\]

As such we can further determine that the endpoints of the boxcar predicted in theorem 1 obey $g(\epsilon) = (\lambda \epsilon + \eta) \Delta f(\epsilon)$. However, this still doesn't tell us what are the intervals that constitute the actual optimal boxcar. To find this out, let us consider first the values of the multipliers ($\lambda^*, \eta^*$) for which the optimal boxcar is attained and let $\Pi$ be the partition of the line created by the solutions in $\epsilon$ to $g(\epsilon) = (\lambda^* \epsilon + \eta^*) \Delta f(\epsilon)$. Exploring the fact that $\mathsf{Q} = \mathsf{L}$ (as seen in the proof for theorem 1), we can then consider the problem of minimizing $\mathcal{L}$ for a transmission function in $\mathcal{R}$ that is constant on the intervals in $\Pi$ (so that it is piecewise constant) and satisfies the constraints for $I$ and $J$. This optimization problem would read

\begin{framed}
Minimize
\[
\sum_{i=1}^M A_i \tau_i \mbox{ subject to the constraints}
\]
\begin{equation}
0\leq \tau_i \leq 1,\quad\quad\quad \sum_{i=1}^M B_i \tau_i = I_0\quad\quad\quad\mbox{and}\quad\quad\quad \sum_{i=1}^M C_i \tau_i = J_0,
\label{KKT-programm}
\end{equation}
\end{framed}
where
\[
A_i = \int_{\mathcal{I}_i} g(\epsilon) d\epsilon ,\quad\quad\quad B_i = \int_{\mathcal{I}_i} \Delta f(\epsilon) d\epsilon, \quad\quad\quad C_i = \int_{\mathcal{I}_i} \epsilon \Delta f(\epsilon) d\epsilon .
\]
and $\mathcal{I}_1, \ldots, \mathcal{I}_M$ are the intervals in $\Pi$. By construction there is an optimal solution where the $\tau_i$ (the value of the transmission function in $\mathcal{I}_i$) are either 0 or 1. Examining the Karush–Kuhn–Tucker (KKT) conditions \cite{lin-prog, vanderberghe} of the problem (\ref{KKT-programm}) we get

\begin{framed}
\begin{equation}
x_i, y_i \geq 0 ,\quad\quad x_i (\tau_i - 1) = y_i \tau_i = 0 \quad\quad\mbox{and}\quad\quad A_i + (x_i - y_i) - \lambda^* C_i - \eta^* B_i = 0,\quad\quad\mbox{for }i = 1, \ldots, M .
\label{KKT-conditions}
\end{equation}
\end{framed}
\noindent \fix{}{where the $x_i$ and $y_i$ are auxiliary (slack) variables.}

It is easy to show that a consequence of (\ref{KKT-conditions}) is that 

\[
\tau_i = 0 \Rightarrow A_i - \lambda^* C_i - \eta^* B_i \geq 0 \quad\quad\mbox{and}\quad\quad \tau_i = 1 \Rightarrow A_i - \lambda^* C_i - \eta^* B_i \leq 0 .
\]
Noticing that 
\[
A_i - \lambda^* C_i - \eta^* B_i = \int_{\mathcal{I}_i} g(\epsilon) - (\lambda^* \epsilon + \eta^*) \Delta f(\epsilon) d\epsilon ,
\]
and that by hypothesis $g(\epsilon) - (\lambda^* \epsilon + \eta^*) \Delta f(\epsilon)$ doesn't change signs inside of $\mathcal{I}_i$, this implies that the optimal boxcar (which is the one satisfying the KKT conditions) must be the one determined by: 

\[
g(\epsilon) - (\lambda^* \epsilon + \eta^*) \Delta f(\epsilon) \leq 0 ,
\]
concluding the proof.
\end{proof}

\section{Additional remarks}
\label{app:misc}

\fix{}{
\subsection{Changes in the boxcar configurations}

As was seen in the main text, a variety of different boxcar configurations are possible (see for example figure \ref{fig:example1}). Since the endpoints of the boxcars are given by the solutions of $g(x) = (\lambda x + \eta) \Delta f (x)$, then the implicit function theorem implies that these endpoints can only change continuously, except for the cases where the endpoint corresponds to a solution with multiplicity higher than 1 or to a solution coming from $\pm \infty$. What this means is that these 2 extreme cases correspond to the situations where the configuration can change.

More concretely, $z$ is a solution of $g(x) = (\lambda x + \eta) \Delta f (x)$ with multiplicity higher than 1 iff $z$ is a root of $\mathcal{G}(x) - (\lambda x + \eta)$ with multiplicity higher than 1 (where $\mathcal{G} = \nicefrac{g}{\Delta f}$), which is equivalent to $\lambda = \mathcal{G}'(z)$, $\eta = \mathcal{G}(z) - z\mathcal{G}'(z)$. This corresponds to the situations where boxcars merge or split, as well as the situations where boxcars pop in and out of existence at finite energies (see figures \ref{fig:bif-1} and \ref{fig:bif-2}).

\begin{figure}[htbp!]
    \centering
    \includegraphics[width=0.6\textwidth]{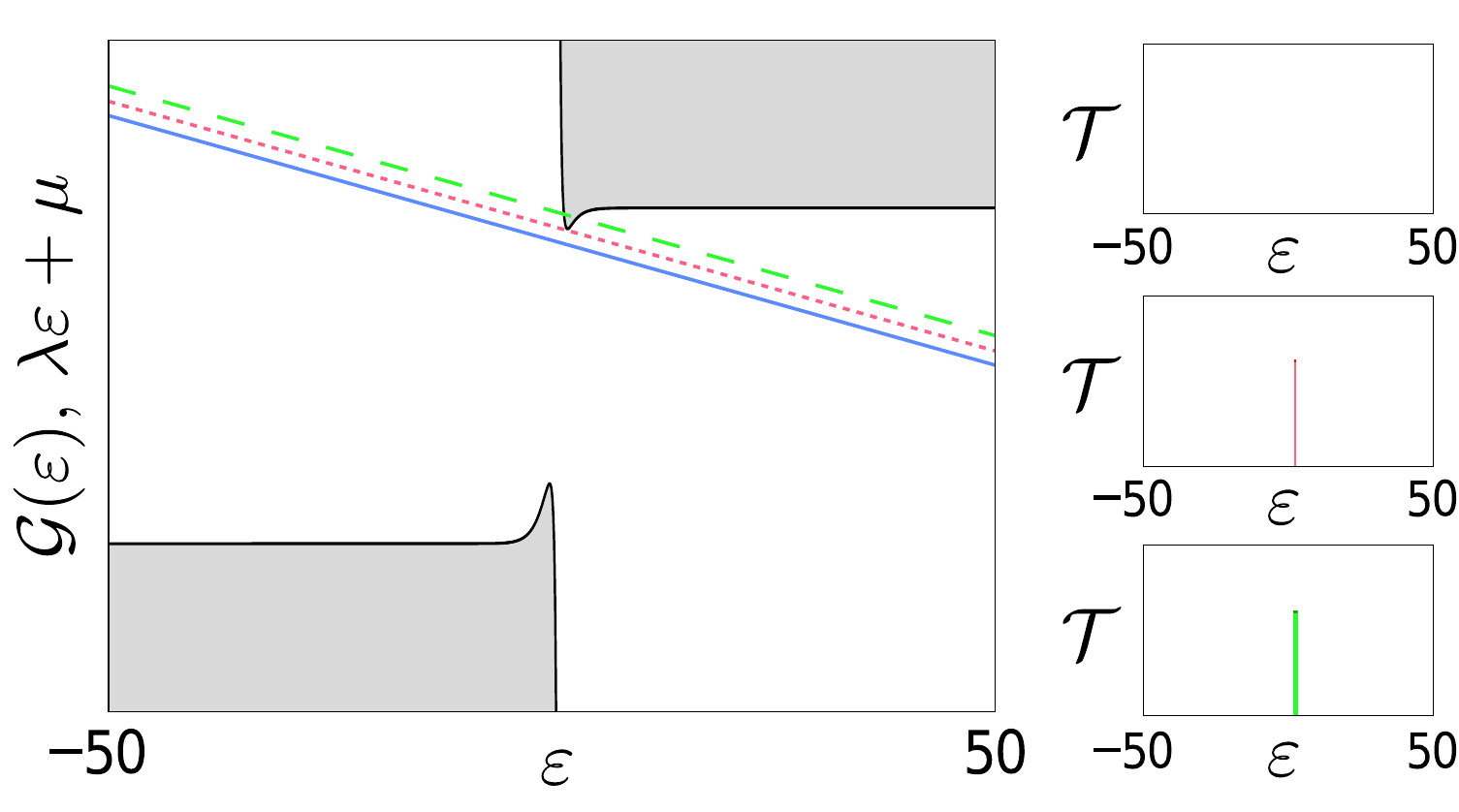}
    \caption{\fix{}{A transition between 2 configurations where a boxcar pops into existence. The bottom blue line does not intersect the gray region and hence corresponds to an empty boxcar (a possible solution for $I=0$, $J=0$) as seen in the top-right graph. Increasing $\eta$ until the line is tangent (red dotted) leads to a single point of intersection and a boxcar concentrated in a single energy as seen in the center-right graph. Further increasing $\eta$ (green dashed) the intersection increases and the corresponding boxcar widens (bottom-right graph).}}
    \label{fig:bif-1}
\end{figure}

\begin{figure}[htbp!]
    \centering
    \includegraphics[width=0.6\textwidth]{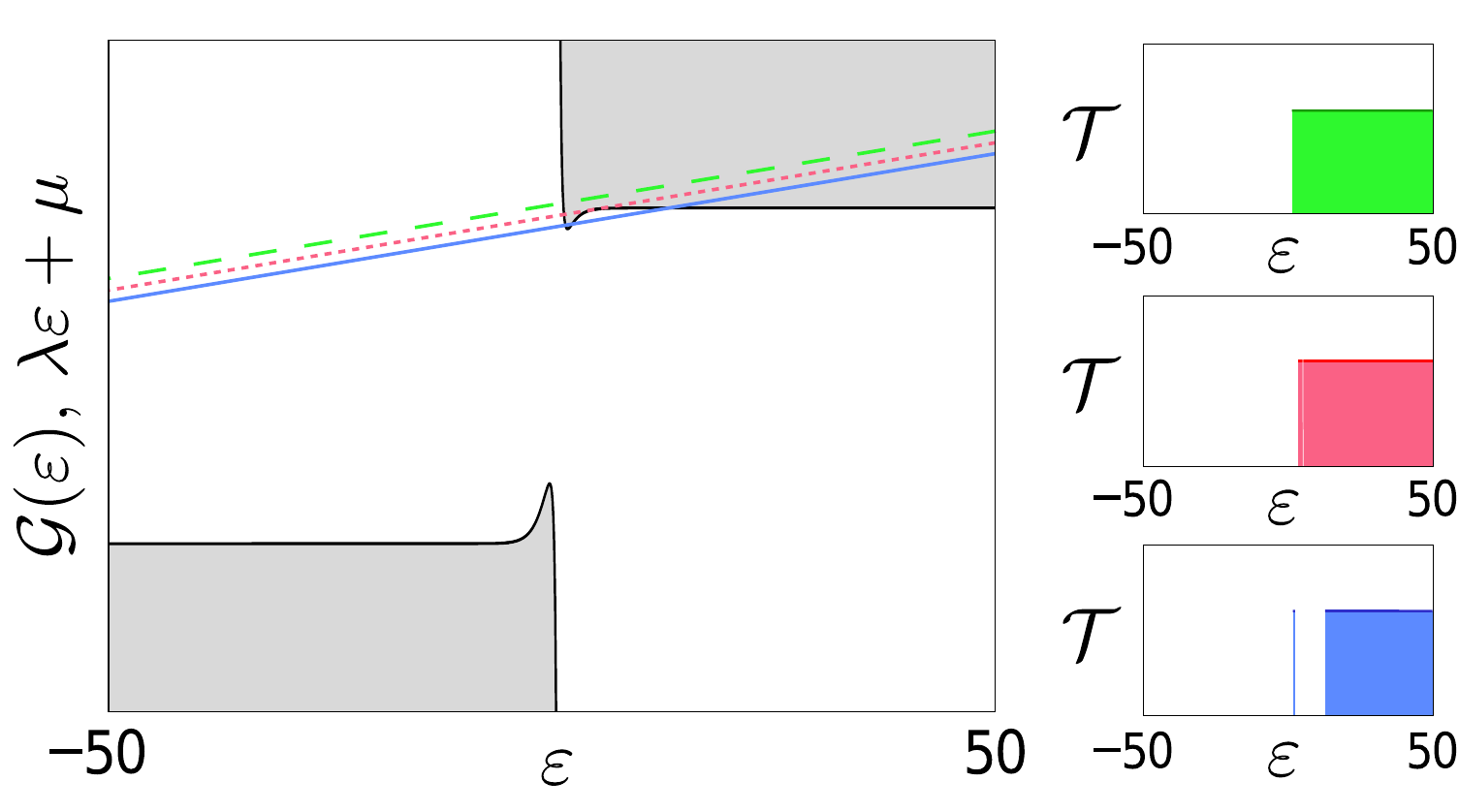}
    \caption{\fix{}{A transition between 2 configurations where two boxcars merge. The bottom blue line has as intersection with the gray area that has 2 components (plus a third one for $\epsilon \ll 0$ that we'll ignore). This leads to a configuration that has a gap in the transmission function (top-right graph). If we increase $\eta$ (red dotted) the intersection widens until the 2 components become arbitrarily close and the gap arbitrarily narrow (center-right graph). Further increasing $\eta$ (green dashed) leads to the components becoming one and the gap disappearing (bottom-right graph).}}
    \label{fig:bif-2}
\end{figure}

The situations where solutions emerge from $\pm \infty$ happen when these solutions emerge for the equation $\mathcal{G}(x) = \lambda x + \eta$. This happens when $\lambda = 0$ as a consequence of the fact that $\mathcal{G}$ converges to a finite value at $x = \pm \infty$ (this is illustrated in figure \ref{fig:bif-3}).

\begin{figure}[htbp!]
    \centering
    \includegraphics[width=0.6\textwidth]{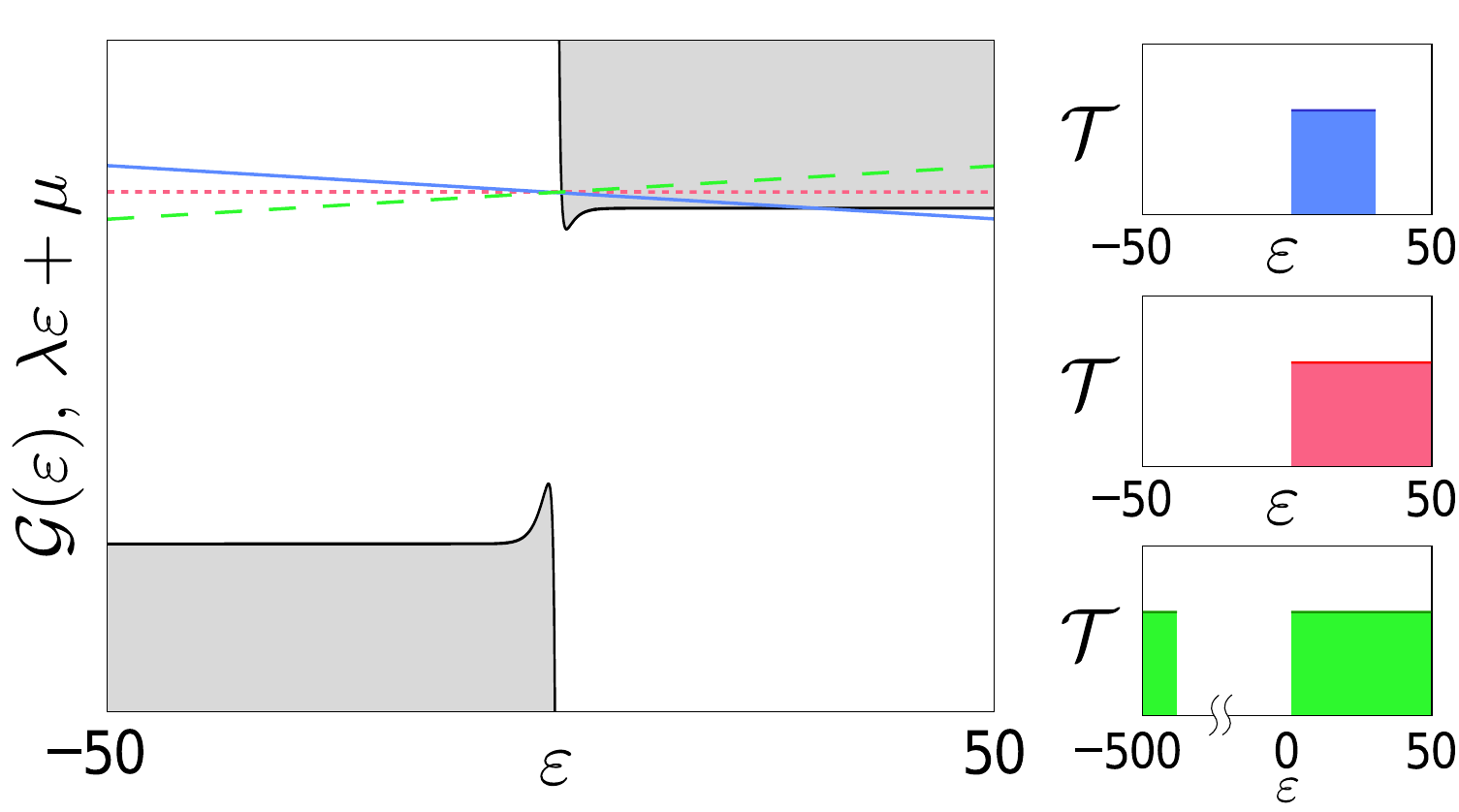}
    \caption{\fix{}{A transition between 2 configurations where solutions at $\pm\infty$ emerge. For the fully horizontal line (red dotted) the intersection with the gray region extends all the way to $\infty$ and there is no intersection for $\epsilon < 0$, leading to an infinite boxcar (center-right graph). However, if we decrease $\lambda$ from 0 to a negative value (blue) the intersection no longer extends all the way to $\infty$, so now we have a finite boxcar (top-right graph). On the other hand, increasing $\lambda$ to a positive value (green dashed; keeps the intersection going all the way to $\infty$ but also creates a new one for $\epsilon \ll 0$ (happening beyond the scale of the graph to the left). This leads to a situation with 2 boxcars each extending to $\infty$ and $-\infty$ respectively (bottom-right graph - to display the second boxcar we break the scale in $\epsilon$).}}
    \label{fig:bif-3}
\end{figure}



}

\subsection{Derivatives and monotonicity of $I(\lambda, \eta)$ and $J(\lambda, \eta)$}
\label{supmat:monotonic}

\fix{When $(\lambda, \eta) \notin\mathcal{B}$, then t}{T}he functions $I$ and $J$ can be differentiated via the implicit function theorem\fix{.}{}~\fix{In this case,}{when} $R(x; \lambda, \eta) \equiv g(x) - (\lambda x + \eta) \Delta f(x)$ only has simple roots. Let $x^*$ be such a root. Applying the implicit function theorem yields

\begin{equation}
\frac{\partial x^*}{\partial \eta} = \frac{\Delta f(x^*)}{R'(x^*;\lambda, \eta)}\quad\quad\mbox{and}\quad\quad \frac{\partial x^*}{\partial \lambda} = \frac{x^*\Delta f(x^*)}{R'(x^*;\lambda, \eta)}.
\label{eq:TFI}
\end{equation}

The boxcar corresponding to $(\lambda, \eta)$ is the indicator function of an union $\bigcup_{k} [a_k,b_k]$, with $a_1 < b_1 < \ldots < a_{n} < b_n$ (where we can potentially have $a_1 = -\infty$ or $b_n = \infty$). Define then
\[
\xi_k = 
\left\{
\begin{array}{cl}
0 & \mbox{if } a_k = -\infty \\
\frac{\Delta f(a_k)^2}{R'(a_k;\lambda, \eta)} & \mbox{otherwise}
\end{array}
\right.
\quad\quad\mbox{and}\quad\quad
\theta_k = 
\left\{
\begin{array}{cl}
0 & \mbox{if } b_k = \infty \\
\frac{\Delta f(b_k)^2}{R'(b_k;\lambda, \eta)} & \mbox{otherwise}
\end{array} .
\right.
\]
Since $I$ and $J$ will be given by

\[
I = \sum_i\int_{a_i}^{b_i} \Delta f(\epsilon) d\epsilon,\quad\quad J = \sum_i\int_{a_i}^{b_i} \Delta f(\epsilon) \epsilon d\epsilon ,
\]
then from eq (\ref{eq:TFI}) one obtains

\[
\frac{\partial I}{\partial \eta} = \sum_k (\theta_k - \xi_k), \quad\quad
\frac{\partial I}{\partial \lambda} = \frac{\partial J}{\partial \eta} = \sum_k (\theta_k b_k - \xi_k a_k)\quad\quad\mbox{and}\quad\quad \frac{\partial J}{\partial \lambda} = \sum_k (\theta_k b_k^2 - \xi_k a_k^2) .
\]

Since $a_k$ is always the leftmost end of one of the boxcar pieces, then either $a_k = -\infty$ or $R'(a_k;\lambda, \eta) < 0$. Similarly, since the $b_k$ are rightmost ends, then either $b_k = \infty$ or $R'(b_k;\lambda, \eta) > 0$. From their definition, this implies that either $a_k = -\infty$ or $\xi_k < 0$ and either $b_k = \infty$ or $\theta_k > 0$ (note that this is a consequence of~\fix{$\varepsilon_S$ never being an endpoint}{theorem 2}). So comparing with the expressions obtained, we see that if $(\lambda, \eta)$ doesn't correspond to $\mathcal{T} = 0$ or $\mathcal{T} = 1$ everywhere (that is, $R$ has at least one root), then we have

\[
\frac{\partial I}{\partial \eta} > 0\quad\quad \mbox{and}\quad\quad
\frac{\partial J}{\partial \lambda} > 0 , 
\]
otherwise both derivatives are 0.

\subsection{Continuity of $\Delta_{I,{\rm opt}}^2(I,J)$}

We can show that the optimal value\fix{ of}{} $\Delta_{I,{\rm opt}}^2$ is a continuous function~\fix{}{of the constrained currents $I$ and $J$}. This follows from showing that $\mathsf{L}$ as a function of the constraint values $\phi$ is a convex function (and that $\mathsf{L} = \mathsf{Q}$, as was shown in the proof of theorem \ref{teor:boxcar})

\begin{theorem}
$\mathsf{L}(\phi)$ is convex as a function of $\phi$.
\label{lemma:L-convex}
\end{theorem}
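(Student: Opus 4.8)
The plan is to use the standard fact that the optimal value of a linear program depends convexly on the right-hand sides of its linear constraints. Here the ``linear program'' is the minimization of the \emph{linear} functional $\mathcal{L}$ over the convex admissible set $\{f\in\mathcal{R}:0\leq f\leq 1\}$ intersected with the affine conditions $\mathcal{C}_i[f]=\phi_i$, so convexity of $\mathsf{L}(\phi)$ should follow from a one-line interpolation argument at the level of the competitors $f$.

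First I would fix two constraint vectors $\phi^{(0)},\phi^{(1)}$ in the domain of $\mathsf{L}$ (i.e.\ both with nonempty feasible region, so that the two infima are finite by the standing hypotheses on $\beta,\gamma$) and a parameter $t\in[0,1]$, and set $\phi^{(t)}=(1-t)\phi^{(0)}+t\phi^{(1)}$. Given $\varepsilon>0$, I would choose near-optimal feasible functions $f_0,f_1$, i.e.\ $f_j\in\mathcal{R}$ with $0\leq f_j\leq 1$, $\mathcal{C}_i[f_j]=\phi_i^{(j)}$ for all $i$, and $\mathcal{L}[f_j]\leq\mathsf{L}(\phi^{(j)})+\varepsilon$. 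Then I would form the pointwise convex combination $f_t=(1-t)f_0+t f_1$ and verify that it is an admissible competitor for the problem at $\phi^{(t)}$: it is bounded with values in $[0,1]$, and on every compact interval it has only finitely many discontinuities (the union of the two discontinuity sets is still finite), so $f_t\in\mathcal{R}$; moreover, by linearity of each $\mathcal{C}_i$, $\mathcal{C}_i[f_t]=(1-t)\phi_i^{(0)}+t\phi_i^{(1)}=\phi_i^{(t)}$, hence $f_t\in\mathcal{F}(\phi^{(t)})$. (Incidentally, this also shows the domain of $\mathsf{L}$ is convex.)

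Next, using linearity of $\mathcal{L}$ together with the definition of $\mathsf{L}$ as an infimum, one gets
\[
\mathsf{L}(\phi^{(t)})\leq\mathcal{L}[f_t]=(1-t)\mathcal{L}[f_0]+t\,\mathcal{L}[f_1]\leq(1-t)\mathsf{L}(\phi^{(0)})+t\,\mathsf{L}(\phi^{(1)})+\varepsilon .
\]
Letting $\varepsilon\to 0$ yields $\mathsf{L}(\phi^{(t)})\leq(1-t)\mathsf{L}(\phi^{(0)})+t\,\mathsf{L}(\phi^{(1)})$, which is exactly convexity.

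The argument is essentially routine; the only points requiring care are that the infimum defining $\mathsf{L}$ need not be attained (so one must work with $\varepsilon$-minimizers rather than exact optimizers), and the check that the convex combination $f_t$ remains in the regularity class $\mathcal{R}$ — both boundedness and finiteness of the discontinuity set on compacts are preserved under finite linear combinations, so this is immediate but should be stated explicitly. Finally, I would record that combining this lemma with the identity $\mathsf{Q}=\mathsf{L}$ established in the proof of Theorem~\ref{teor:boxcar}, and the standard fact that a finite convex function on a convex set is continuous on its relative interior, gives the continuity of $\Delta_{I,{\rm opt}}^2$ as a function of $(I,J)$ asserted above.
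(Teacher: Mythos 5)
Your proof is correct and follows essentially the same route as the paper's: fix two feasible constraint vectors, take $\varepsilon$-minimizers, form the pointwise convex combination, use linearity of $\mathcal{C}_i$ and $\mathcal{L}$, and let $\varepsilon\to 0$. Your explicit check that the convex combination remains in $\mathcal{R}$ is a detail the paper leaves implicit, but the argument is the same.
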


\begin{proof}
Let $\mathcal{F}(\phi)$ denote the feasible set for a problem with constraints $\phi$ and let $\phi_1, \phi_2$, such that $\mathcal{F}(\phi_1), \mathcal{F}(\phi_2) \neq \varnothing$ (and hence such that $\mathsf{L}$ is finite). For all $\varepsilon > 0$ we can then find functions $f_1, f_2$ such that $f_i \in \mathcal{F}(\phi_i)$ and $\mathcal{L}[f_i] \leq \mathsf{L}(\phi_i) + \varepsilon$. It follows that for every $t\in [0,1]$, we have $F_t\equiv (t f_1 + (1-t) f_2) \in \mathcal{F}(t \phi_1 + (1-t) \phi_2)$. Moreover:

\[
\mathcal{L}[f_1] \leq \mathsf{L}(\phi_1) + \varepsilon \quad\quad\mbox{and}\quad\quad \mathcal{L}[f_2] \leq \mathsf{L}(\phi_2) + \varepsilon \quad\mbox{imply that}
\]\[
\mathcal{L}[F_t] \leq t \mathsf{L}(\phi_1) + (1-t) \mathsf{L}(\phi_2) + \varepsilon \Rightarrow \mathsf{L}(t \phi_1 + (1-t) \phi_2) \leq t \mathsf{L}(\phi_1) + (1-t) \mathsf{L}(\phi_2) + \varepsilon , 
\]
and since this must hold for all~\fix{$\varepsilon \geq 0$}{$\varepsilon > 0$}, then it follows that

\[
\mathsf{L}(t \phi_1 + (1-t) \phi_2) \leq t \mathsf{L}(\phi_1) + (1-t) \mathsf{L}(\phi_2) ,
\]
implying convexity.
\end{proof}

\bibliography{library_OK}
\end{document}